\pgfplotsset{compat=1.18}
\newtheorem{theorem}{Theorem}
\newtheorem{lemma}[theorem]{Lemma}
\newtheorem{corollary}[theorem]{Corollary}
\newtheorem{proposition}[theorem]{Proposition}
\newtheorem{question}{Question}
\newtheorem{definition}[theorem]{Definition}
\theoremstyle{definition}
\newtheorem*{problem*}{Problem}
\newtheorem*{assumption*}{Assumption}
\newtheorem*{warning*}{Warning}
\crefname{theorem}{Thm.}{Thms.}
\Crefname{theorem}{Theorem}{Theorems}
\crefname{proposition}{Prop.}{Props.}
\Crefname{proposition}{Proposition}{Propositions}
\crefname{lemma}{Lem.}{Lemmas}
\Crefname{lemma}{Lemma}{Lemmas}
\crefname{definition}{Def.}{Defs.}
\Crefname{definition}{Definition}{Definitions}
\crefname{equation}{eq.}{eqs.}
\Crefname{equation}{Equation}{Equations}
\crefname{figure}{fig.}{figs.}
\Crefname{figure}{Figure}{Figures}
\crefname{appendix}{Appendix}{Appendices}
\Crefname{appendix}{Appendix}{Appendices}
\newcommand{\ketbra}[2]{|#1\rangle\langle#2|}
\newcommand{\ket}[1]{|#1\rangle}
\newcommand{\kettbra}[1]{\ketbra{#1}{#1}}
\newcommand{\bra}[1]{\langle#1|}
\newcommand{\norm}[1]{\lVert #1\rVert}
\newcommand{\oo}{\infty}
\newcommand{\ox}{\otimes}
\newcommand{\mc}{\mathcal}
\newcommand{\eps}{\varepsilon}
\newcommand{\III}{{\mathrm{III}}}
\newcommand{\II}{{\mathrm{II}}}
\newcommand{\I}{{\mathrm{I}}}
\newcommand{\up}[1]{^{(#1)}}
\DeclareMathOperator{\tr}{Tr}
\DeclareMathOperator{\Tr}{Tr} 
\renewcommand{\tilde}{\widetilde}
\newcommand{\hide}[1]{}
\def\A{{\mc A}}
\def\B{{\mc B}}
\def\CC{{\mathbb C}}
\def\H{{\mc H}}
\def\K{{\mathcal K}}
\def\M{{\mc M}}
\def\N{{\mc N}}
\def\NN{{\mathbb N}}
\def\ZZ{{\mathbb Z}}
\newcommand{\R}{\mc R}
\newcommand{\Rep}{\mathrm{Rep}}
\newcommand{\Vect}{\mathrm{Vec}}
\def\locc{\xrightarrow{\LOCC}}
\newcommand{\LOCC}{\mathrm{LOCC}}
\newcommand{\placeholder}[0]{{\,\cdot\,}}
\def\locc{\xrightarrow{\LOCC}}
\def\lu{\xleftrightarrow{\,\mathrm{LU}\,}}
\newcommand{\no}{%
\tikz[scale=0.23] {
    \draw[line width=0.7,line cap=round] (0.0,0.05) to [bend left=4] (.9,1);
    \draw[line width=0.7,line cap=round] (0.1,0.95) to [bend right=2] (0.8,0.05);
}}
\newcommand{\yes}{%
\tikz[scale=0.23] {
    \draw[line width=0.7,line cap=round] (0.25,0) to [bend left=10] (1,1);
    \draw[line width=0.8,line cap=round] (0,0.35) to [bend right=1] (0.23,0);
}}
\begin{document}

\title{The Large-Scale Structure of Entanglement in Quantum Many-body Systems}
\author{Lauritz van Luijk, Alexander Stottmeister, and Henrik Wilming}
\affiliation{\small Institut f\"ur Theoretische Physik, Leibniz Universit\"at Hannover, Appelstraße 2, 30167 Hannover, Germany}
\date{\today}

\begin{abstract}
We show that the thermodynamic limit of a many-body system can reveal entanglement properties that are hard to detect in finite-size systems -- similar to how phase transitions only sharply emerge in the thermodynamic limit. The resulting operational entanglement properties are in one-to-one correspondence with abstract properties of the local observable algebras that emerge in the thermodynamic limit. 
These properties are insensitive to finite perturbations and hence describe the \emph{large-scale structure of entanglement} of many-body systems.
We formulate and discuss the emerging structures and open questions, both for gapped and gapless many-body systems. In particular, we show that every gapped phase of matter, even the trivial one, in $D\geq 2$ dimensions contains models with the strongest possible bipartite large-scale entanglement. Conversely, we conjecture the existence of topological phases of matter, where all representatives have the strongest form of entanglement.
\end{abstract}
\maketitle

The thermodynamic limit is a powerful idealization to sharply characterize bulk properties of matter, such as their thermodynamic phases. 
In recent decades, entanglement properties of quantum many-body systems have been found to be crucial to understand their properties at zero and non-zero temperature, in particular their critical behavior \cite{osborne_entanglement_2002,vidal_entanglement_2003,calabrese_entanglement_2004,orus_half_2006} and topological order \cite{wen_topological_1990,wen_topological_1995,kitaevFaulttolerantQuantumComputation2003,kitaevAnyonsExactlySolved2006}. 
Discussing entanglement requires partitioning a system into multiple parts--the simplest scenario consisting of a bipartition.
So far, the study of entanglement in many-body systems is mostly restricted to settings where one of the parts is finite, resulting in necessarily finite amounts of entanglement, or studying the finite-size scaling of associated entanglement measures (see \cite{amicoEntanglementManybodySystems2008,eisertColloquiumAreaLaws2010} for reviews).

This poses the natural question whether it is operationally meaningful to talk about entanglement of quantum many-body systems directly in the thermodynamic limit, when each of the subsystems is infinite and thereby allowing for infinite entanglement. This question has a tradition in quantum field theory (QFT) and related settings, where infinitely many degrees of freedom are not necessarily an idealization; see \cite{hollands_entanglement_2018} for a recent account.
It is well known that entanglement measures, such as the entanglement entropy, diverge for all states in the vacuum sector of a relativistic QFT, indicating an infinite amount of entanglement that, for example, can be used to maximally violate Bell inequalities \cite{summers_vacuum_1985,summers_maximal_1987} or distill arbitrary entangled states \cite{verch_distillability_2005}.

Here, we invoke this question specifically for infinite-volume ground states, which are significantly less constrained than relativistic vacua due to the absence of Lorentz invariance. We ask: Do qualitatively new features appear in such a setting that are difficult or impossible to detect using finite-size scaling?
Can we find an (idealized) description for such settings?
In particular, do these properties have operational characterizations? And finally, do these properties also have operational significance for large but finite systems?
The aim of this paper is to give affirmative answers to these questions, explain the arising structures and formulate a number of emerging open problems.

\begin{figure}[t]
	\includegraphics[width=8cm]{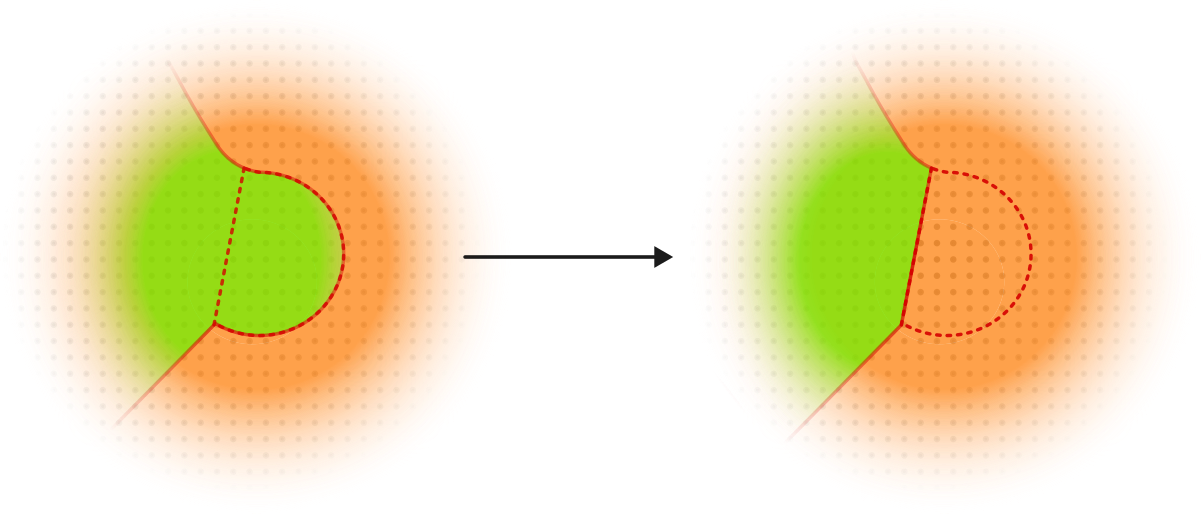}
	\caption{Illustration of \cref{prop:stable-type} and \cref{thm:stable-bipartite}: Modifying a bipartition into two (properly) infinite regions by a finite amount does not change the bipartite large-scale entanglement properties listed in \cref{tab:types}.}\label{fig:perturbation}
\end{figure}

Our results show that many-body systems can have strikingly different \emph{large-scale structures of entanglement}, which are insensitive to perturbations on arbitrarily large but finite subsystems and are in direct correspondence to operational tasks in entanglement theory. 
We relate these structures to critical systems, gapped phases of matter, and topological order. 

On a technical level, this paper collects and connects a number of results recently established in quantum information theory and many-body physics using operator algebraic techniques. In particular, we connect results relating properties of local observable algebras in quantum many-body systems to critical behavior or its absence (in one spatial dimension) \cite{matsui_split_2001,keyl_entanglement_2006,matsui_boundedness_2013,van_luijk_critical_2024} or topological order (in higher dimensions) \cite{naaijkensAnyonsInfiniteQuantum2012,fiedlerHaagDualityKitaevs2015,naaijkensQuantumSpinSystems2017,ogata_type_2022,naaijkensSplitApproximateSplit2022,jonesLocalTopologicalOrder2023,tombaBoundaryAlgebrasKitaev2023,bhardwaj_superselection_2024} with concrete operational tasks in entanglement theory. We hope that this article provides a starting point for researchers from quantum information theory, many-body theory, and mathematical physics alike to contribute to the understanding of the large-scale structure of entanglement in quantum many-body systems. 

\vspace{-0.25cm}
\subsection{Local operations and ground state sectors of many-body systems}
\vspace{-0.15cm}
Throughout, we are interested in entanglement properties of the ground states of quantum many-body systems in the thermodynamic limit. We now formulate the basic setting of this work. 
To be able to describe large-scale entanglement properties, we need to introduce some basic terminology for the operator algebraic description of many-body systems in the thermodynamic limit. For detailed treatments, see, for example, \cite{BR1,BR2}. 

We consider a many-body system consisting of spins localized at the sites of a lattice $\Gamma$, e.g., $\Gamma=\ZZ^D$.\footnote{Much of what follows can be formulated for fermionic systems, see, for example, \cite[Sec.~2.6]{BR1} as well as the Supplemental Information of \cite{van_luijk_critical_2024}.}
Each spin is described by the $d$-dimensional Hilbert space $\CC^d$.
For a finite region $A\Subset \Gamma$ ("$\Subset$" indicates finite subsets), we set $\H_A = \bigotimes_{x\in A}\CC^d$ and $\A_A = \B(\H_A)$.
If $A\subset B$, we identify $\A_A\subset\A_B$ as a subalgebra via the map $a_A \mapsto a_A \ox 1_{B\setminus A}$.
The union $\A_\Gamma = \bigcup_{A\Subset \Gamma} \A_A$ of this increasing net of subalgebras is the algebra of operators with finite support.\footnote{$\A_\Gamma$ is a unital $*$-algebra and comes with a natural operator norm, defined as $\norm a_{\A_\Gamma} = \norm a_{\B(\H_A)}$ for some/all $A\Subset \Gamma$ such that $a_A\in \A_A=\B(\H_A)$, which turns it into a pre-C*-algebra whose completion is the so-called quasi-local algebra \cite[Sec.~2.6.1]{BR1}.}
Even if a region $A\subset \Gamma$ is infinite, we can associate the subalgebra $\A_A = \bigcup_{A'\Subset A} \A_{A'}$ with it.

To connect the local operator algebras with quantum mechanics, we consider a Hilbert space $\H$ carrying an irreducible representation of $\A_\Gamma$. While distinguishing between $\A_\Gamma$ and its representation on a Hilbert space is irrelevant for finite systems, it is crucial for infinite systems. In an infinite system, different (inequivalent) representations may correspond to physically "orthogonal" settings, for example, different symmetry-broken phases. 
We explain below that they can also carry vastly different entanglement structures. 
Nevertheless, to simplify notation we will not distinguish between $\A_\Gamma$ and its representation as we will always work in a representation that arises from a ground state of a Hamiltonian, see below.
We refer to $\H$ as a \emph{sector} of the spin system.
The quantum states of the sector are given by the density operators $\rho$ on $\H$. \footnote{The sector $\H$ is fully determined by its states up to unitary equivalence. Mathematically speaking, this is because two irreducible representations are unitarily equivalent if and only if they induce the same sets of normal states \cite[Sec.~2.4.4]{BR1}.}
As always, we identify pure states with unit vectors $\Psi\in\H$ via $\rho = \kettbra\Psi$.

In the cases of interest, the sector $\H$ will be the ground state sector of a local Hamiltonian $H$, i.e., we can formally write $H= \sum_{A\Subset\Gamma} h_A$ for operators $h_A\in\A_A$ such that the norm $\norm{h_A}$ decays sufficiently fast as the region $A$ increases (interactions become weaker at larger distances); see \cite{BR2} for details.
If a pure ground state of the Hamiltonian $H$ is fixed, we can consider all the states arising by acting with operators from $\A_\Gamma$ and complete the resulting vector space to obtain the Hilbert space $\H$.
Mathematically speaking, we apply the Gelfand-Naimark-Segal (GNS) construction.
Then, $H$ becomes a positive self-adjoint operator on $\H$, and the ground state is realized by a specific vector $\Omega\in \H$ with $H\Omega=0$. Importantly, the resulting sector depends on the ground state, hence on the Hamiltonian, with potentially vastly varying entanglement properties.

To set up entanglement theory, we need to specify a notion of local operation. 
We consider operations that act on a \emph{finite} (but unbounded) number of spins. 
Such operations preserve the various sectors of the many-body system.
Operations localized in a region $A\subset\Gamma$ are specified by operators in $\A_A$.
To be more precise, a quantum channel $T_A$ localized in a region $A\subset\Gamma$ (possibly containing infinitely many spins) is given by Kraus operators $k_\alpha \in \A_A$ with $\sum_\alpha k_\alpha^*k_\alpha=1$, i.e.,
\begin{equation}
    \rho \mapsto T_A(\rho) = \sum_\alpha k_\alpha \rho k_\alpha^\dagger.
\end{equation}
Combining this notion of local operations with classical communication, we obtain a definition of local operations and classical communication (LOCC) with finite but unbounded support.
Thus, if $A_1,\ldots, A_N$ is a partition of $\Gamma$, then an LOCC protocol of the corresponding $N$-parties is an operation $T$ that can be obtained by iterating the basic step of a local operation in one of the regions followed by communicating all measurement results to all other regions (of course, the choice of the operation in the next step may depend on what has been communicated).
Accordingly, we say that a state $\sigma$ is approximately reachable from a state $\rho$ by means of LOCC (or that $\rho$ is more entangled than $\sigma$), denoted by
\begin{equation}
    \rho \locc \sigma,
\end{equation}
if for every $\eps>0$ there exists an LOCC operation $T$ such that $T(\rho)\approx_\eps\sigma$.\footnote{We write $\rho\approx_\eps \sigma$ if $\norm{\rho-\sigma}_1\le \eps$.}
Similarly, we write $\rho \lu \sigma$ if for all $\eps>0$, there exist unitaries $u_x \in \A_{A_x}$ such that $u\rho u^\dagger \approx_\eps \sigma$, where where $u=\prod_xu_x$.
If the transformation is only possible up to a fixed error $\eps$, we write $\locc_\eps$ or $\lu_\eps$.

For simplicity, we restrict our attention to LOCC transformations between pure states in the remainder of the paper. In case of pure states with representing vectors $\Psi,\Phi\in \H$, we write $\Psi\locc_{(\eps)}\Phi$ or $\Psi\lu_{(\eps)}\Phi$.
We can operationally distinguish different states by probing them with entanglement distillation tasks.

To describe entanglement distillation tasks, we associate to each subsystem $A_{j}$ of the $N$-partite many-body system a finite-dimensional quantum system with Hilbert space $\CC^n$ and observable algebra $\B_{j} \cong M_n(\CC)$. In other words, we attach to the $N$-partite many-body system an $N$-partite finite-dimensional quantum system with total Hilbert space $(\CC^n)^{\otimes N}$. 
On the extended system, we consider LOCC protocols where local quantum channels on the subsystem $j$ have Kraus operators $k_\alpha \in \A_{A_j}\ox \B_j$.
We always assume that the initial state on the auxiliary system is given by the product state $\ket{0}^{\otimes N}$ when considering distillation tasks.
Thus, all entanglement that is generated on the auxiliary system, represented by a final state $\ket\Phi \in (\CC^n)^{\otimes N}$, must be extracted from the many-body system at hand.

\begin{table*}[t]\centering
\setlength{\tabcolsep}{7pt}
\renewcommand\arraystretch{1.2}
\begin{tabular}{@{} l cc cc ccc @{}}
\toprule
\multirow{2}{*}[-0.5\dimexpr \aboverulesep + \belowrulesep + \cmidrulewidth]{operational property} & \multicolumn{2}{c}{type I} & \multicolumn{2}{c}{type II} & \multicolumn{3}{c}{type III}\\
\cmidrule(lr){2-3} \cmidrule(lr){4-5} \cmidrule(l){6-8}
& I$_n$       & \ I$_\infty$   & \ II$_1$    & \ II$_\infty$ & \ III$_0$ & \ III$_\lambda$ & \ III$_1$ \\ 
 \midrule
one-shot entanglement           & \!\!\! $\le$$\log_2n$\!    & \!$<$$\infty$    & $\oo$&$\oo$ & $\oo$&$\oo$&$\oo$                                \\
all pure states LOCC equivalent & \no&\no     &\no&\no      & \yes&\yes&\yes         \\ 
maximally entangled state       & \yes &    \no          & \yes        & \no             & \yes&\yes&\yes          \\
embezzling states             & \no&\no&\no&\no & (\,\yes\,) & \yes & \yes\\
worst embezzlement capability $\kappa_{\textit{max}}$ &  2 & 2    & 2&2       & 2 & \!\!$2\frac{1-\sqrt\lambda}{1+\sqrt\lambda}$\!\! & 0 \\ 
\bottomrule
\end{tabular}
\caption{Correspondence between operational entanglement properties in the ground state sector and the type classification of factors. In the table, $n$ denotes an integer $n\in\NN$ and $\lambda$ a real number $0<\lambda<1$.
The inequalities for the one-shot entanglement in type $\I$ are bounds on the maximal number of Bell states that can be distilled from any given state.
We refer to \cite{van_luijk_relativistic_2024,long_paper} for the formal definition of $\kappa_{\textit{max}}$ and the derivation of the formula for type $\III_\lambda$ factors. Since $\lambda\mapsto 2\frac{1-\sqrt\lambda}{1+\sqrt\lambda}$ is invertible, $\lambda$ is determined by the operational quantity $\kappa_{\textit{max}}$. 
If $\kappa_{\textit{max}}=0$ then every state is embezzling. Some $\III_0$ factors admit embezzling states, while others do not.}
\label{tab:types}
\end{table*}
\begin{definition}[{\cite{keyl_infinitely_2003}}]
    A state $\Phi\in (\CC^n)^{\ox N}$ can be distilled from $\Psi \in\H$ relative to fixed $N$-partition of $\Gamma$ if for every $\eps>0$, there exists an LOCC protocol $T_\eps$ such that
    \begin{equation}
        \bra\Phi \tr_\H T_\eps(\kettbra\Psi\ox \kettbra0^{\ox N}) \ket \Phi \geq 1-\eps.
    \end{equation}
\end{definition}
We caution the reader that $\Psi$ and $\Phi$ in the previous definition as well as below are elements of different Hilbert spaces: $\Psi\in \H$ is state of the many-body system while $\Phi\in (\CC^n)^{\ox N}$ is a finite-dimensional state shared among the local systems associated with the $N$-partition of $\Gamma$.

In this paper, we consider the following operational properties of a state vector $\Psi\in\H$ relative to a fixed $N$-partition of the lattice:
\begin{enumerate}
    \item \emph{Infinite one-shot entanglement}:
    Every $\Phi\in (\CC^n)^{\ox N}$ for every $n$ can be distilled from $\Psi$. 
    \item \emph{LOCC-embezzlement of entanglement}:
    For every $n\in\NN$, $\Phi\in (\CC^n)^{\ox N}$, 
    \begin{equation}
        \Psi \ox \ket0^{\ox N} \locc \Psi \ox \Phi.
    \end{equation}
    \item \emph{Embezzlement of entanglement}:
    For every $n\in\NN$, $\Phi\in (\CC^n)^{\ox N}$, 
    \begin{equation}
        \Psi \ox \ket0^{\ox N} \lu \Psi \ox \Phi.
    \end{equation}
\end{enumerate}
In the case of $N>2$ parties, these tasks demand the extraction of genuine multipartite entanglement.
Embezzlement of entanglement was discovered in an approximate form in finite-dimensional systems in \cite{van_dam_universal_2003}.
If the second or third property hold, we say that that $\Psi$ is an (LOCC-)embezzling state (or \emph{embezzler} for short).
Clearly, every embezzler is an LOCC-embezzler, and every LOCC-embezzler has infinite one-shot entanglement.

The fact that the definition of LOCC (and similarly that of LU) convertibility allows for arbitrarily small errors has the crucial consequence that we obtain the same notion of convertibility if we allow certain idealized operations. On the level of Kraus operators, one may pass to limits in the so-called strong operator topology (SOT).\footnote{The strong operator topology on $\B(\H)$ is the topology of pointwise convergence on $\H$: A net $(x_\alpha)$ converges to $x$ in the strong operator topology if and only if $\lim_\alpha \norm{(x_\alpha-x) \Psi}=0$ for all $\Psi\in\H$} 
The closure of the local algebra $\A_A$ in the SOT, is a von Neumann algebra
\begin{equation}
\M_A=\overline{\A_A}^{\text{SOT}}.
\end{equation}
A von Neumann algebra is a SOT-closed unital *-algebras of bounded operators on a Hilbert space.

Since $\A_A$ and $\A_B$ commute for disjoint regions $A\cap B=\emptyset$, the same holds for the von Neumann algebras $\M_A$ and $\M_B$. Thus, it follows that:

\begin{theorem}\label{thm:only factors matter}
    For a fixed partition of $\Gamma$ into regions $A_1,\ldots A_N\subset\Gamma$, all entanglement properties of the ground state sector $\H$ are encoded in the collection of commuting von Neumann algebras $\M_{A_1},\ldots \M_{A_N}$.
\end{theorem}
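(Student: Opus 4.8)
The plan is to argue that every ingredient appearing in the operational definitions---LOCC protocols, LU transformations, and distillation tasks---can be expressed entirely in terms of the von Neumann algebras $\M_{A_1},\dots,\M_{A_N}$ and their commutation relations, rather than the concrete representation $\H$ or the $*$-algebra $\A_\Gamma$. First I would recall that by the observation preceding the statement, the SOT-closure of each local algebra satisfies $\M_{A_j}\subseteq \M_{A_k}'$ whenever $A_j\cap A_k=\emptyset$, so the data $(\M_{A_1},\dots,\M_{A_N})$ is a system of commuting von Neumann algebras on $\H$. The key point is then that an LOCC protocol with finite but unbounded support, composed of local channels with Kraus operators $k_\alpha\in\A_{A_j}$ (possibly tensored with auxiliary matrix algebras $\B_j$), can always be approximated in the relevant topology by one whose Kraus operators lie in $\M_{A_j}\ox\B_j$: indeed $\A_{A_j}$ is SOT-dense in $\M_{A_j}$, and the footnote in the excerpt already records that passing to SOT-limits of Kraus operators does not change LOCC- or LU-convertibility because those notions tolerate arbitrarily small errors. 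Conversely, any channel with Kraus operators in $\M_{A_j}$ is an SOT-limit of channels with Kraus operators in $\A_{A_j}$ by Kaplansky density, so the two classes of protocols generate the same convertibility relations $\locc$ and $\lu$.

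Next I would handle the distillation tasks. A distillation protocol outputs a number $\bra\Phi\tr_\H T_\eps(\kettbra\Psi\ox\kettbra0^{\ox N})\ket\Phi$; the only dependence on $\H$ beyond the algebras is through the ground state vector $\Psi$ itself. Here I would invoke the fact (stated in the excerpt via the footnote on the GNS construction and unitary equivalence) that a sector is determined up to unitary equivalence by its normal states, together with the standard observation that the restriction of the vector state $\omega_\Psi=\ip{\Psi}{\,\cdot\,\Psi}$ to the algebra generated by $\M_{A_1},\dots,\M_{A_N}$---which, since the $\A_{A_j}$ together generate $\A_\Gamma$ and the representation is irreducible, is a SOT-dense subalgebra of $\B(\H)$, hence has SOT-closure all of $\B(\H)$---already pins down $\Psi$ as a vector state up to a phase. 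Thus the numerical output of every distillation task, and hence whether a given $\Phi$ is distillable, is a function of the tuple of algebras and the state $\omega_\Psi$, both of which are intrinsic to the collection $(\M_{A_1},\dots,\M_{A_N},\omega_\Psi)$; and since the statement fixes the ground state sector $\H$ (so $\Psi$ ranges over the vectors of a fixed sector), it suffices that the algebras determine all of $\B(\H)$ together with its normal states, which is exactly irreducibility.

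Assembling these, I would conclude: each of the three operational properties (infinite one-shot entanglement, LOCC-embezzlement, embezzlement), as well as the finer invariants in \cref{tab:types} such as one-shot entanglement cost and $\kappa_{\textit{max}}$, is defined by quantifying over LOCC or LU protocols and evaluating fidelities of output states, and by the two paragraphs above all of these quantities depend on $\H$ only through $(\M_{A_1},\dots,\M_{A_N})$. I expect the main obstacle to be the bookkeeping around auxiliary systems and the precise topology in which Kraus-operator approximation must be controlled---one must check that replacing $\A_{A_j}\ox\B_j$ by $\M_{A_j}\ox\B_j$ is harmless not merely for a single channel but throughout an iterated, classically-correlated protocol of unbounded (though finite at each stage) length, uniformly enough that the $\eps$-tolerance in the definitions absorbs the error. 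This is where Kaplansky density (giving norm-bounded SOT-approximation, so that the induced maps on states converge) does the real work, and the irreducibility of the representation is what guarantees that no information about $\Psi$ is lost when one restricts attention to the local algebras.
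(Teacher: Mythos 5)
Your proposal is correct and follows essentially the same route as the paper, which justifies the theorem precisely by the observation that LOCC/LU convertibility tolerates arbitrarily small errors, so Kraus operators may be replaced by their SOT-limits and the protocols with operators in $\A_{A_j}$ and in $\M_{A_j}=\overline{\A_{A_j}}^{\text{SOT}}$ generate the same convertibility and distillation relations. Your additional remarks on Kaplansky density and on the state being recovered from the algebras via irreducibility fill in details the paper leaves implicit, but do not change the argument.
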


The importance of \cref{thm:only factors matter} is that it allows us to use von Neumann algebraic techniques to answer entanglement questions in many-body systems.\footnote{It is, however, not the case that $\Psi\locc\Phi$ implies that one can find an LOCC protocol with Kraus operators from the respective von Neumann algebras, which takes $\Psi$ to $\Phi$ with error $\eps=0$.}

Since the representation of $\A_\Gamma$ on $\H$ is irreducible, we have $\M_\Gamma=\B(\H)$.
Hence, the commuting algebras $\M_A$ and $\M_{A^c}$ generate $\B(\H)$ as a von Neumann algebra. This implies that the algebras $\M_A$ are so-called \emph{factors},\footnote{More precisely, the irreducibility of the representation on $\H$ implies that the inclusion $\M_{A}\subseteq\M_{A^{c}}'$ is an irreducible subfactor inclusion; see, for example, \cite[Ch.~9]{evans1998qsym}.} i.e., von Neumann algebras with $\M_A\cap \M_A'=\CC1$, where 
\begin{equation}
    \R'=\{a\in \B(\H) : [a,b]=0\ \forall b\in \R\}
\end{equation}
denotes the commutant of a collection of operations $\R\subset\B(\H)$.
Since the factors $\M_A$ are built from matrix algebras, they are what is called \emph{approximately finite-dimensional}\footnote{Also known as: hyperfinite, injective, amenable \cite{connes_classification_1976}.} factors (see \cite{takesaki3}).

\vspace{-0.25cm}
\subsection{Bipartite entanglement and von Neumann type}
\vspace{-0.15cm}
We now fix a bipartition $\Gamma=A\cup A^c$ of the latticeand assume that \emph{Haag duality} holds, which means that $\M_A$ and $\M_{A^c}$ are commutants of each other:
\begin{equation}
    \M_A' = \M_{A^c}.
\end{equation}
Haag duality means that every symmetry of $A$, i.e., any unitary operator $u$ such that $u a u^\dagger = a$ for all $a\in\M_A$, is an element of $\M_{A^c}$. By \cref{thm:only factors matter}, what we say in the following actually holds for any bipartite system described by commuting factors on a Hilbert space $\H$ in Haag duality together with any pair of weakly dense *-subalgebras $\A_{A\up c}\subset\M_{A\up c}$.
Taken together, the results of the recent papers \cite{van_luijk_pure_2024,long_paper,van_luijk_relativistic_2024} show that the bipartite entanglement structure is in one-to-one correspondence with the algebraic structure of the factors $\M_A$ and $\M_{A^c}$.
These results are summarized in \cref{tab:types}. To explain them, we briefly sketch the classification of factors in von Neumann algebras theory:

At a coarse level, factors are sorted into three classes, called type $\I$, $\II$ and $\III$.
Type $\I$ factors are of the form $\B(\K)$ for some Hilbert space $\K$ and are sub-classified into types $\I_n$ where $n=\dim\K$ (which may be infinite).
Type $\II$ factors are those factors that are not of type $\I$ but still admit a nontrivial trace. They come in types $\II_1$ and $\II_\oo$ (depending on whether the identity operator has a finite trace or not). All remaining factors are of type $\III$. Type $\III$ factors are classified into subtypes $\III_\lambda$, $0\le \lambda\le 1$. The classification of approximately finite-dimensional factors is a deep mathematical result, which was essentially completed by the works of Connes and Haagerup \cite{connes1973classIII, connes_classification_1976,haagerup_connes_1987}.
With the exception of type $\III_0$ factors, all approximately finite-dimensional factors (appearing in our case, see above) are completely determined by the type classification; see, for example, \cite{takesaki2,takesaki3} for a detailed exposition.
The results shown in \cref{tab:types} show that type and subtype determine and are determined by the operational tasks discussed above and whether they allow for maximally entangled states.

For instance, a type $\I$ ground state sector is equivalent to the ground state
having finite one-shot entanglement, a type $\III$ sector is equivalent to the ground state being LOCC-embezzler, and a type $\II_\oo$ ground state sector is equivalent to infinite one-shot entanglement and the non-existence of a maximally entangled state.

Entanglement is often considered in a \emph{resource theory}: Alice and Bob may prepare arbitrary systems local to them in arbitrary pure states and apply LOCC transformations "for free."\footnote{In fact, by allowing for general localized quantum channels, we effectively already allow each of the agents, controlling a region $A_i$, to locally couple their subsystem to a quantum mechanical system prepared in an arbitrary pure state.}
Entangled states, mixed or pure, correspond to those states that cannot be prepared for free in this sense.
Every entangled state is a \emph{resource state}, from which other entangled states may be distilled via LOCC, and the amount of entanglement it contains can be measured by the states that can be distilled from it.
Given a bipartite system described by $\M_A,\M_B$ on $\H$ fulfilling Haag duality, product states relative to $A$ and $B$ exist in $\H$ if and only if $\M_A$ has type $\I$.
In case of types $\II$ and $\III$, all states, even mixed ones, are entangled. 
Indeed, comparing to \cref{tab:types}, we find that all (pure) states have infinite one-shot entanglement and hence allow to distill arbitrary entangled states.
Nevertheless, in the type $\II$ case, there is still a non-trivial LOCC ordering between entangled states, i.e., a non-trivial resource theory, and we can define entanglement measures, such as entanglement entropies, in a meaningful way, see for example \cite{crann_state_2020,van_luijk_pure_2024}.
On the other hand, the LOCC resource theory for pure states trivializes in type $\III$ case: Any two pure states can be interconverted via LOCC. We will see below that all these scenarios can arise in ground state sectors of many-body systems.

\vspace{-0.25cm}
\subsection{Stability to perturbations}
\vspace{-0.15cm}
The entanglement properties listed in \cref{tab:types} are properties of the full ground state sector and not of individual states. 
Crucially, under the assumption of Haag duality these do not change if a finite number of degrees of freedom are moved from Alice's part to Bob's (see \cref{fig:perturbation}).

\begin{proposition}\label{prop:stable-type}
    If a region $A$ is infinite (contains infinitely many degrees of freedom), then $\M_A$ is isomorphic to $\M_{A\cup R}$ whenever $R$ is finite. Moreover, if Haag duality holds  for $A$ (i.e., $\M_A' = \M_{A^c}$) it holds holds for $A\cup R$.
\end{proposition}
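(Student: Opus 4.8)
The plan is to split the finite region $R$ off as a tensor factor of the representation Hilbert space, reducing both assertions to a single structural fact about approximately finite-dimensional factors. I may assume $R\subseteq A^c$ (otherwise replace $R$ by the still-finite set $R\setminus A$, which changes neither $A\cup R$ nor $(A\cup R)^c$), and I put $k=d^{\abs R}$ so that $\A_R=\B(\H_R)\cong M_k(\CC)$. Since $R$ is finite, $\A_\Gamma$ is, as a $*$-algebra, the tensor product $\A_R\ox\A_{R^c}$, and restricting the irreducible representation on $\H$ to the matrix algebra $\A_R$ forces a unitary identification $\H\cong\H_R\ox\H_{R^c}$ under which $\A_R$ acts as $\B(\H_R)\ox 1$ and $\A_{R^c}$ acts as $1\ox\pi_{R^c}(\A_{R^c})$ for an irreducible representation $\pi_{R^c}$. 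Because $\mathrm{SOT}$-closure commutes with tensoring by the fixed finite-dimensional factor $\B(\H_R)$ (and with $1\ox(\,\cdot\,)$), one then reads off, for every $S\subseteq R^c$, that $\M_S=1_{\H_R}\ox\N_S$ and $\M_{S\cup R}=\B(\H_R)\ox\N_S$, where $\N_S:=\overline{\pi_{R^c}(\A_S)}^{\mathrm{SOT}}\subseteq\B(\H_{R^c})$ and all tensor products of von Neumann algebras are spatial. Taking $S=A$ and abbreviating $\N:=\N_A$, this yields $\M_A\cong\N$ and $\M_{A\cup R}\cong M_k(\CC)\ox\N$, so the first assertion reduces to $\N\cong M_k(\CC)\ox\N$.

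This last isomorphism is the only genuine input, and the step I would flag as hard — although the difficulty is entirely imported. By the results recalled above, $\N\cong\M_A$ is approximately finite-dimensional, and it is infinite-dimensional precisely because $A$ is infinite: each finite $F\Subset A$ gives a unital faithful embedding $M_{d^{\abs F}}(\CC)\cong\A_F\hookrightarrow\N$. Now any infinite approximately finite-dimensional factor absorbs every matrix algebra, $\N\cong M_k(\CC)\ox\N$: by the Connes--Haagerup classification \cite{connes1973classIII,connes_classification_1976,haagerup_connes_1987} such a factor is determined by its type together with (in the type $\III_0$ case) its flow of weights, and tensoring with $M_k(\CC)$ alters neither — in particular $M_k(\CC)\ox\N$ is again not of type $\I_n$ since $\N$ is infinite. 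Chaining the isomorphisms gives $\M_A\cong\M_{A\cup R}$; everything apart from this one appeal to the classification is bookkeeping.

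For the Haag-duality statement I would run the same decomposition and invoke the commutation theorem for spatial tensor products. Writing $A^c=R\sqcup(A^c\setminus R)$ with $A^c\setminus R\subseteq R^c$, the identities above give $\M_{A^c}=\B(\H_R)\ox\P$ with $\P:=\N_{A^c\setminus R}$, while the commutation theorem gives $\M_A'=(1_{\H_R}\ox\N)'=\B(\H_R)\ox\N'$. The hypothesis $\M_A'=\M_{A^c}$ therefore reads $\B(\H_R)\ox\N'=\B(\H_R)\ox\P$; taking commutants of both sides gives $\N=\P'$, that is $\N'=\P$ in $\B(\H_{R^c})$ — so $A$ and $A^c\setminus R$ already satisfy Haag duality within the smaller sector $\H_{R^c}$. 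Finally $\M_{A\cup R}'=(\B(\H_R)\ox\N)'=1_{\H_R}\ox\N'=1_{\H_R}\ox\P=\M_{A^c\setminus R}=\M_{(A\cup R)^c}$, which is Haag duality for $A\cup R$. The remaining pitfalls are purely technical — justifying that the $\mathrm{SOT}$-closure and tensor manipulations above are legitimate, and that $\M_A$ is genuinely a factor so the absorption statement applies — and both are supplied by the setup recalled earlier.
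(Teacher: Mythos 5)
Your proof is correct and follows essentially the same route as the paper: split $R$ off as a finite-dimensional tensor factor $\H\cong\H_{R^c}\ox\H_R$, reduce the first claim to the stability $\N\cong\N\ox M_k(\CC)$ of an infinite-dimensional AFD factor, and obtain Haag duality for $A\cup R$ from the tensor-product commutation theorem. The only (minor) difference is that you justify stability by appealing to the Connes--Haagerup classification, whereas the paper proves it directly by splitting into the AFD type $\II_1$ case (via ITPFI factors) and the properly infinite case (via an explicit unitary built from Cuntz isometries); both justifications are valid.
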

The essential ingredient for the proposition is that approximately finite-dimensional factors not of type $\I_n$ with $n<\infty$ are \emph{stable}, meaning that
\begin{align}
	\M \cong \M \ox \B(\K)
\end{align}
for any finite-dimensional Hilbert-space $\K$, see \cref{app:stability}.
A stronger statement can be made in the case that there exists a translation invariant pure state in the sector and that both $A$ and $A^c$ are \emph{properly infinite}, meaning that each contains balls of arbitrary size. 

\begin{theorem}\label{thm:stable-bipartite}
    Suppose $\H$ contains a translation invariant vector $\Omega$.
    Let $\Gamma = A\cup B$ be a bipartition into properly infinite regions for which Haag duality holds.
    Let $R\Subset A$ be finite.
    If we set $\tilde A = A\setminus R$ and $\tilde B=B\cup R$,
    there exists a unitary $U$ on $\H$ such that
    \begin{align}
        \begin{aligned}U\M_{A} U^\dagger &= \M_{\tilde A},\\  U\M_{B} U^\dagger &= \M_{\tilde B}.
        \end{aligned}
    \end{align}
\end{theorem}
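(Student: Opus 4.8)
The plan is to reduce the claim to producing a single unitary $U$ on $\H$ with $U\M_AU^\dagger=\M_{\tilde A}$, and to obtain such a $U$ from the uniqueness theory for representations of von Neumann algebras. First, the bookkeeping. Haag duality for $A$ reads $\M_A'=\M_B$, and applying \cref{prop:stable-type} with the infinite region $B$ in place of $A$—legitimate, since $\M_B'=\M_A$ is the same statement as $\M_A'=\M_B$—yields Haag duality for $\tilde B=B\cup R$, i.e.\ $\M_{\tilde A}'=\M_{\tilde B}$. Hence any $U$ with $U\M_AU^\dagger=\M_{\tilde A}$ automatically satisfies $U\M_BU^\dagger=U\M_A'U^\dagger=(U\M_AU^\dagger)'=\M_{\tilde A}'=\M_{\tilde B}$, so it is enough to conjugate $\M_A$ onto $\M_{\tilde A}$. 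For the structure of this pair: since $A=\tilde A\sqcup R$ one has $\M_A=\M_{\tilde A}\vee\M_R$ with $\M_R=\A_R\cong M_{d^{|R|}}(\CC)$ a finite type~$\I$ subfactor commuting with the factor $\M_{\tilde A}$; choosing the splitting of $\H$ adapted to $\M_R$ realizes $\M_A$ spatially as $\M_{\tilde A}\ox\B(\H_R)$. As $\M_{\tilde A}$ is an approximately finite-dimensional factor (like every $\M_X$) which is infinite-dimensional ($\tilde A$ is infinite), it is stable, and therefore $\M_A=\M_{\tilde A}\ox\B(\H_R)\cong\M_{\tilde A}$ as abstract von Neumann algebras.

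It remains to promote this abstract isomorphism to a spatial one, i.e.\ to show that $\M_A$ and $\M_{\tilde A}$ are unitarily conjugate as subalgebras of $\B(\H)$. The clean sufficient condition is that they act with \emph{infinite multiplicity} on $\H$, equivalently that their commutants $\M_B$ and $\M_{\tilde B}=M_{d^{|R|}}(\M_B)$ are properly infinite: granting this, $U$ is obtained by composing three unitaries, $(\M_A$ on $\H)\cong(\M_A\ox\B(\ell^2)$ on $\H\ox\ell^2)\cong(\M_{\tilde A}\ox\B(\ell^2)$ on $\H\ox\ell^2)\cong(\M_{\tilde A}$ on $\H)$, where the outer equivalences express infinite multiplicity and the middle one is uniqueness of the representation for the abstractly isomorphic, properly infinite algebras $\M_A\ox\B(\ell^2)$ and $\M_{\tilde A}\ox\B(\ell^2)$ (which have the manifestly properly infinite commutants $\M_B\ox\B(\ell^2)$ and $\M_{\tilde B}\ox\B(\ell^2)$). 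Proper infiniteness of $\M_B$ is moreover necessary here: using $\M_A=M_{d^{|R|}}(\M_{\tilde A})$ together with the essential uniqueness of traces, one checks that if $\M_B$ were of type~$\II_1$ the coupling constants of $\M_A$ and $\M_{\tilde A}$ on $\H$ would differ by the factor $d^{2|R|}$, so no such $U$ could exist. For type~$\I$ the corresponding statement is visible directly: Haag duality gives $\H\cong\H_A\ox\H_B$ with $\M_A=\B(\H_A)$, the two multiplicities are $\dim\H_B$ and $d^{|R|}\dim\H_B$, and they agree because an infinite region $B$ forces $\dim\H_B=\infty$; for type~$\III$ there is nothing to check.

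The argument therefore hinges entirely on one point, which is the main obstacle: \emph{under the hypotheses, $\M_B$ is properly infinite} (equivalently, not of type~$\II_1$). This is where the two hypotheses enter, and it is genuinely more than \cref{prop:stable-type} provides—that proposition uses neither translation invariance nor proper infiniteness and rests on stability with respect to \emph{finite}-dimensional spaces, a property even type~$\II_1$ factors enjoy, whereas here we need the strictly stronger $\M_B\cong\M_B\ox\B(\ell^2)$. To establish proper infiniteness of $\M_B$, I would use that a properly infinite region $B$ contains infinitely many pairwise disjoint translates $R_1,R_2,\dots$ of $R$, so that $\bigvee_j\M_{R_j}\subseteq\M_B$ is an infinite tensor product of copies of $\B(\H_R)$, and then exploit translation invariance of $\Omega$—which supplies unitaries on $\H$ implementing the lattice translations carrying one copy of $R$ to the next, along with an invariant unit vector—to produce inside $\M_B$ an infinite family of pairwise orthogonal, mutually equivalent, nonzero projections summing to $1$; such a family witnesses that $1_{\M_B}$ is properly infinite. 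As a sanity check, in one dimension with $A$ a half-line the whole construction is transparent: moving the boundary sites of $A$ across the cut is implemented literally by the translation unitary determined by $\Omega$, and the theorem is the assertion that the same survives in higher dimensions and for arbitrary finite $R$, where no such literal translation is available.
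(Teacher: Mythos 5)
Your overall architecture coincides with the paper's: reduce to conjugating $\M_A$ onto $\M_{\tilde A}$ (the statement for $B$ then following from Haag duality, which you correctly transport to the new bipartition via \cref{prop:stable-type}), obtain the abstract isomorphism $\M_A\cong\M_{\tilde A}\ox M_{d^{|R|}}(\CC)\cong\M_{\tilde A}$ from stability, and promote it to a unitary equivalence on $\H$ using uniqueness of faithful normal representations with properly infinite commutant --- the paper phrases this as uniqueness of the standard representation (\cref{lem:properly-inf2}), which is the same mechanism. You also correctly isolate the crux: everything hinges on ruling out type $\II_1$ for the local factors, which is exactly the content of the paper's \cref{lem:properly-inf} and the only place where translation invariance and proper infiniteness of the regions are used. (A small slip along the way: the commutant of $\M_A\ox\B(\ell^2)$ on $\H\ox\ell^2$ is $\M_B\ox\CC 1$, not $\M_B\ox\B(\ell^2)$; the latter is the commutant of the amplification $\M_A\ox\CC 1$, which is what your middle step actually needs. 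This is harmless once proper infiniteness is established.)

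The genuine gap is in your argument for that crux. The unitaries implementing lattice translations on $\H$ are global operators: they carry $\M_B$ onto $\M_{B+x}\neq\M_B$ and are not elements of $\M_B$, so they cannot supply the Murray--von Neumann equivalences (partial isometries \emph{inside} $\M_B$) needed to exhibit an infinite family of pairwise orthogonal, mutually equivalent projections summing to $1$ in $\M_B$. Moreover, the observation that $\bigvee_j\M_{R_j}$ is an infinite tensor product of copies of $\B(\H_R)$ carries no force by itself: the hyperfinite $\II_1$ factor is precisely such an infinite tensor product (the ITPFI factor with maximally mixed reference state, cf.\ \cref{sec:ITPFI}), so an abundance of commuting matrix subalgebras is entirely compatible with $\M_B$ being finite. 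Excluding type $\II_1$ requires the more delicate argument the paper invokes by citation (\cite{naaijkensAnyonsInfiniteQuantum2012}, adapting \cite{keyl_entanglement_2006}), in which the uniqueness of the normal trace on a $\II_1$ factor is played off against translation invariance and purity of $\Omega$ to reach a contradiction. As written, your step ``translation invariance $\Rightarrow$ infinitely many mutually equivalent orthogonal projections in $\M_B$'' does not go through and would have to be replaced by that argument.
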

\Cref{prop:stable-type,thm:stable-bipartite} show that the entanglement properties of a sector listed in \cref{tab:types} are insensitive to finite perturbations consisting of either changing the bipartition in a finite way. Moreover, these entanglement properties are, by definition, not affected by quantum channels acting on any finite region.
We may, therefore, say that they characterize the bipartite large-scale structure of entanglement of a many-body system. 
In the remainder, we explore these properties in one and two spatial dimensions.

The operational tasks associated with the type classification currently crucially rely on Haag duality, which must be verified in concrete models because there is no general argument that shows Haag duality in spin systems, or be replaced by a suitably weakened version as, for example, in the context of topological order \cite{ogata_derivation_2022}.
An important open question, therefore, is:
\begin{question}
    Can Haag duality fail in ground states of locally interacting spin systems for simply connected regions?
    If so, what operational significance does the failure of Haag duality have for operational entanglement properties of pure ground state?
\end{question}
As noted in the context of \cref{thm:only factors matter}, Haag duality does not fail in an arbitrary sense for pure ground states as a bipartition $\Gamma = A\cup A^{c}$ necessarily leads to an irreducible subfactor inclusion, i.e., $\M_{A}\subseteq\M_{A^{c}}'$, such that $\M_{A}'\cap\M_{A^{c}}'=\CC$.

It is known that Haag duality fails for \emph{disconnected} subsystems consisting of two spatially separated infinite cones in two-dimensional topologically ordered systems due to the existence of ribbon operators connecting anyons that can be deformed freely \cite{naaijkensLocalizedEndomorphismsKitaevs2011,naaijkensAnyonsInfiniteQuantum2012,fiedlerHaagDualityKitaevs2015}.
In \cite{fiedlerJonesIndexSecret2017}, the authors relate this observation to secret sharing schemes arising in topologically ordered systems in terms of a \emph{tripartition} of the lattice. 
The authors further discuss an operational interpretation of the index of finite-index subfactor inclusions $\N \subset\M$ in terms of the amount of information that can be accessed from $\M$ as compared to $\N$, see also \cite{gaoRelativeEntropyNeumann2020}. However, both interpretations do not directly lend an operational interpretation to the failure of Haag duality in terms of bipartite entanglement theory for observers associated with subsystems $A$ and $A^c$, where, in general, we cannot associate a physical system with the factor $\M_A'$ that strictly contains $\M_{A^c}$. 

\vspace{-0.25cm}
\subsection{Finite systems}
\vspace{-0.15cm}
The large-scale properties of entanglement only have a sharp description in the thermodynamic limit.
What effect do the different types of von Neumann algebras have for large, but finite systems?
In essence, the operational tasks that can be achieved in the thermodynamic limit can be achieved approximately for any desired precision $\eps>0$ if the system is sufficiently large. 
This is due to the fact that the LOCC operations we consider only act on finite subsystems, the sector $\H$ consists of (quasi-)local excitations of the ground state, and operators localized in finite subsystems cannot distinguish well the ground state in the thermodynamic limit from the ground state in sufficiently large systems (if the ground state in the thermodynamic limit is unique). 

We illustrate this in two ways. First, suppose that a local many-body system has a unique ground state $\Omega$ in the thermodynamic limit and denote by $\Omega_L$ the sequence of ground states when the system is defined on concentric balls $B(L)$ of radius $L$ (with open boundary conditions). Furthermore, fix some partition $\Gamma = \cup_{i=1}^N A_i$ of the infinite lattice. For each system-size $L$ we can then consider the $N$-partite system made up of parts $A_i\up{L} := A_i \cap B(L)$. Then the following is true:

\begin{proposition}\label{prop:distillation}
    If $\Psi$ can be distilled from $\Omega$ via LOCC, then for every $\eps>0$ there exists a system-size $L_0$ such that $\Psi$ can be distilled from $\Omega_L$ up to error $\eps$ for all $L\geq L_0$.
\end{proposition}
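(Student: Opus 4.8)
The plan is to exploit the fact that the LOCC protocol witnessing $\Psi\locc\Omega$ — or rather the protocol achieving error $\eps/2$ — has Kraus operators with \emph{finite} support, together with the convergence of the finite-volume ground states $\Omega_L$ to $\Omega$ in the relevant local sense. Concretely, suppose $\Psi$ is distilled from $\Omega$ up to error $\eps/2$ by an LOCC protocol $T$ whose Kraus operators $k_\alpha$ are supported in some finite region $R\Subset\Gamma$ (compatible with the $N$-partition, each Kraus operator respecting the factorization $\A_{A_i\cap R}\ox\B_i$). Since the protocol involves only finitely many rounds and finitely many Kraus operators, the map $\rho\mapsto \bra\Phi\tr_\H T(\rho\ox\kettbra0^{\ox N})\ket\Phi$ is of the form $\rho\mapsto\tr(\rho\, X)$ for a single operator $X\in\A_R$ with $0\le X\le 1$. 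Hence distillability up to $\eps/2$ is just the statement $\bra\Omega X\ket\Omega = \tr(\kettbra\Omega X)\ge 1-\eps/2$, i.e.\ a condition on the \emph{local expectation value} of a fixed finite-support observable.

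The second ingredient is that local expectation values in the finite-volume ground states converge to those in the thermodynamic-limit ground state. That is, for any fixed $a\in\A_\Gamma$ one has $\bra{\Omega_L}a\ket{\Omega_L}\to\bra\Omega a\ket\Omega$ as $L\to\oo$; this is the standard characterization of the thermodynamic limit of a (unique) ground state as a weak-* limit of finite-volume ground states, and it is here that uniqueness of $\Omega$ is used — it guarantees the limiting state is well-defined and that the whole sequence (not just a subnet) converges. Applying this to $a=X$: there is $L_0$ such that for all $L\ge L_0$, $\bra{\Omega_L}X\ket{\Omega_L}\ge\bra\Omega X\ket\Omega-\eps/2\ge 1-\eps$. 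Running the \emph{same} protocol $T$ on $\Omega_L$ (which makes sense once $L$ is large enough that $R$ fits inside the ball of radius $L$, after which the Kraus operators act identically) then distills $\Psi$ from $\Omega_L$ up to error $\eps$, which is the claim.

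I would organize the write-up as: (i) reduce $\Psi\locc\Omega$ at precision $\eps/2$ to a finite-support protocol and collapse it to a single effect operator $X\in\A_R$; (ii) invoke weak-* convergence $\omega_L\to\omega$ of the finite-volume ground states and apply it to $X$ to get $L_0$; (iii) observe that for $L\ge L_0$ the protocol can be run verbatim on $\Omega_L$ and achieves error $\le\eps$.

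The main obstacle is step (ii): one must be careful about what ``the sequence of ground states $\Omega_L$'' means and in what topology it converges. With open boundary conditions and a unique infinite-volume ground state, the relevant fact is that any weak-* limit point of the finite-volume ground states is an infinite-volume ground state, hence equals $\omega$ by uniqueness, so the full sequence converges weak-*; this is where the ``unique ground state in the thermodynamic limit'' hypothesis does its work, and it is worth stating precisely (and noting that compactness of the state space gives existence of limit points). A secondary, more cosmetic point is checking that an $N$-round LOCC protocol with finitely many finitely-supported Kraus operators and the fixed auxiliary input $\kettbra0^{\ox N}$ really does collapse the success probability into $\tr(\rho X)$ for a single $X\in\A_R$ with $0\le X\le 1$ — this is a routine unravelling of the definition of LOCC, partial trace over $\H$, and the fact that $\A_R$ is finite-dimensional, but it should be spelled out so the reduction in step (i) is airtight.
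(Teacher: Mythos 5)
Your proposal is correct and follows essentially the same route as the paper's proof: collapse the finite-support LOCC protocol into a single positive finite-support operator (the paper's $a=\sum_\alpha \tilde k_\alpha^\dagger \tilde k_\alpha$, your $X$), then invoke weak*-convergence of the finite-volume ground states to the unique infinite-volume ground state (via \cite[Prop.~5.3.25]{BR2} and weak*-compactness) and apply it to that operator. The only differences are cosmetic ($\eps/2$ bookkeeping versus a final error of $2\eps$).
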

The proof is presented in \cref{app:proof-distillation}, which, in addition, shows that the same LOCC protocol can be used for all system sizes $L\geq L_0$.

Thus, if the sector contains infinite one-shot entanglement, the sequence $\{\Omega_L\}$ contains unbounded one-shot entanglement as well.
Our second illustration comes from embezzlement. It shows that if the ground state $\Omega$ is an embezzling state, then $\{\Omega_L\}$ induce an embezzling family in the sense of \cite{van_dam_universal_2003,van_luijk_multipartite_2024}:

\begin{proposition}[\cite{van_luijk_critical_2024,van_luijk_multipartite_2024}]\label{prop:finite-size}
	Suppose $\Omega$ is an embezzling state. Then $\{\Omega_L\}$ induces an \emph{embezzling family}: There exists a sequence of system sizes $L_{K}$ such that for any $\eps>0$ and any $n$ there exists an $K_*(n,\eps)$ such that for all $K\geq K_*(n,\eps)$ and any two unit vectors $\Psi,\Phi \in (\CC^n)^{\ox N}$ we have
	\begin{align}
	\rho_K \ox \kettbra{\Psi} \lu_\eps \rho_K \ox \kettbra{\Phi},
	\end{align}
where $\rho_K = \tr_{B(K)^c}\kettbra{\Omega_{L_K}}$ is the reduced state of $\Omega_{L_K}$ on $B(K)$,  partitioned into subsystems $A_i\up{K}$.
\end{proposition}

In \cref{prop:finite-size}, we assume that the partitioning of the infinite lattice is fixed a priori, which implies that we have to consider the, in general mixed, reduced states $\rho_K$ as embezzling family relative to the 
subsystems $A_i\up{K}$.
We can consider the family of pure states $\Omega_{L_K}$ itself as a pure embezzling family, if we allow one of the agents, say for $i=1$, to also act on the  relative complement of the ball $B(K)$, so that $B(L_K)$ is partitioned into $\tilde A_1\up{K} = A_1\up{K} \cup(B(L_K)\setminus B(K))$ and $\tilde A_i\up{K}=A_i\up{K}$ for $i>1$. 
This is a straightforward consequence of the fact that purifications are unique up to unitaries on the purifying subsystem.
Let us now turn to the question of how the large-scale entanglement properties relate to other physical properties of the system.

\vspace{-0.25cm}
\subsection{Ground states in $D=1$}
\vspace{-0.15cm}
A natural class of models to study bipartite entanglement in systems with unboundedly many degrees of freedom is given by translation-invariant 1D spin chains partitioned into left and right half chains.

\begin{question}\label{q:entropy growth}
    Which types can occur for translation-invariant spin chains?
    Is the type determined by the growth rate of the entanglement entropy alone?
\end{question}

Combined results of Matsui and Hastings \cite{hastings_area_2007, matsui_boundedness_2013, ukai_hastings_2024} show that gapped ground states yield type $\I$ sectors and, hence, no interesting large-scale entanglement properties. 
The converse, however, does not hold, i.e., there are translation-invariant pure states with type $\I$ sectors and an unbounded entanglement entropy in 1d spin chains \cite{ReinhardPrivate}, although we presently do not know whether they can appear as ground states of local Hamiltonians.\footnote{Werner's construction of translation invariant pure states with diverging entanglement entropy and type I sectors is based on the idea of finitely correlated states \cite{fannesFinitelyCorrelatedStates1992} (or matrix product states) but with infinite bond dimension \cite{ReinhardPrivate}.}

Therefore, we now consider non-gapped models in 1D.
We have shown in \cite{van_luijk_critical_2024} that all critical translation-invariant, and free (Gaussian) fermion models satisfy Haag duality and have type $\III_1$ sectors.
It has been argued using CFT arguments \cite{calabrese_entanglement_2008} that all critical spin systems in one spatial dimension have a universal distribution of Schmidt coefficients. This suggests that all critical spin chains, i.e., those with a CFT scaling limit, have type $\III_1$ sectors.

It is known that type $\II_1$ cannot occur \cite{keyl_entanglement_2006}, and we currently expect that the same holds for type $\II_\oo$.
On the other hand, the "colored Motzkin spin chains" and its relatives  \cite{movassaghSupercriticalEntanglementLocal2016,salbergerFredkinSpinChain2017,salbergerDeformedFredkinSpin2017,zhangQuantumPhaseTransition2016a,menonSymmetriesCorrelationFunctions2024} are gapless and show a \emph{supercritical} growth of entanglement entropy as $O(\sqrt{L})$ or quicker instead of $O(\log L)$ with the systems size $L$, which corresponds to considerably stronger finite-size entanglement than in critical systems.\footnote{While the colored Motzkin spin chain is not translation-invariant on a finite system, it does yield a translation-invariant state in the thermodynamic limit.} In this respect, we note that the original van Dam-Hayden embezzling family \cite{van_dam_universal_2003} shows an $O(L)$-growth of the entanglement entropy if realized via spin chains of length $L$. But, as argued in \cite{van_luijk_multipartite_2024}, said family leads to a $\II_{1}$ factor for the half-chain, which is possible because the model is not translation invariant.
It will be interesting to see whether the Motzkin models also yield type $\III_1$ sectors or whether they allow for different types, which would disprove a conjecture from \cite{keyl_entanglement_2006} that only type $\I$ and $\III_1$ can appear. 

A second open question concerns Haag duality in $D=1$ spin chains.
\begin{question}\label{q:HD}
    Does Haag duality automatically hold for translation-invariant pure states on spin chains and half-chain bipartitions?
\end{question}
An affirmative answer was claimed in \cite{keyl2008haag} but their proof contains an error.\footnote{This is noted by Matsui in \cite{matsui_boundedness_2013} and by Werner in \cite{van_luijk_pure_2024}.}
By \cite{van_luijk_schmidt_2024}, an affirmative answer would imply that all "finitely correlated" pure states are "C*-finitely correlated" \cite{fannesFinitelyCorrelatedStates1992}, i.e., matrix-product states \cite{perez-garciaMatrixProductState2007}.

\begin{figure}
    \centering
    \includegraphics[width=.7\linewidth]{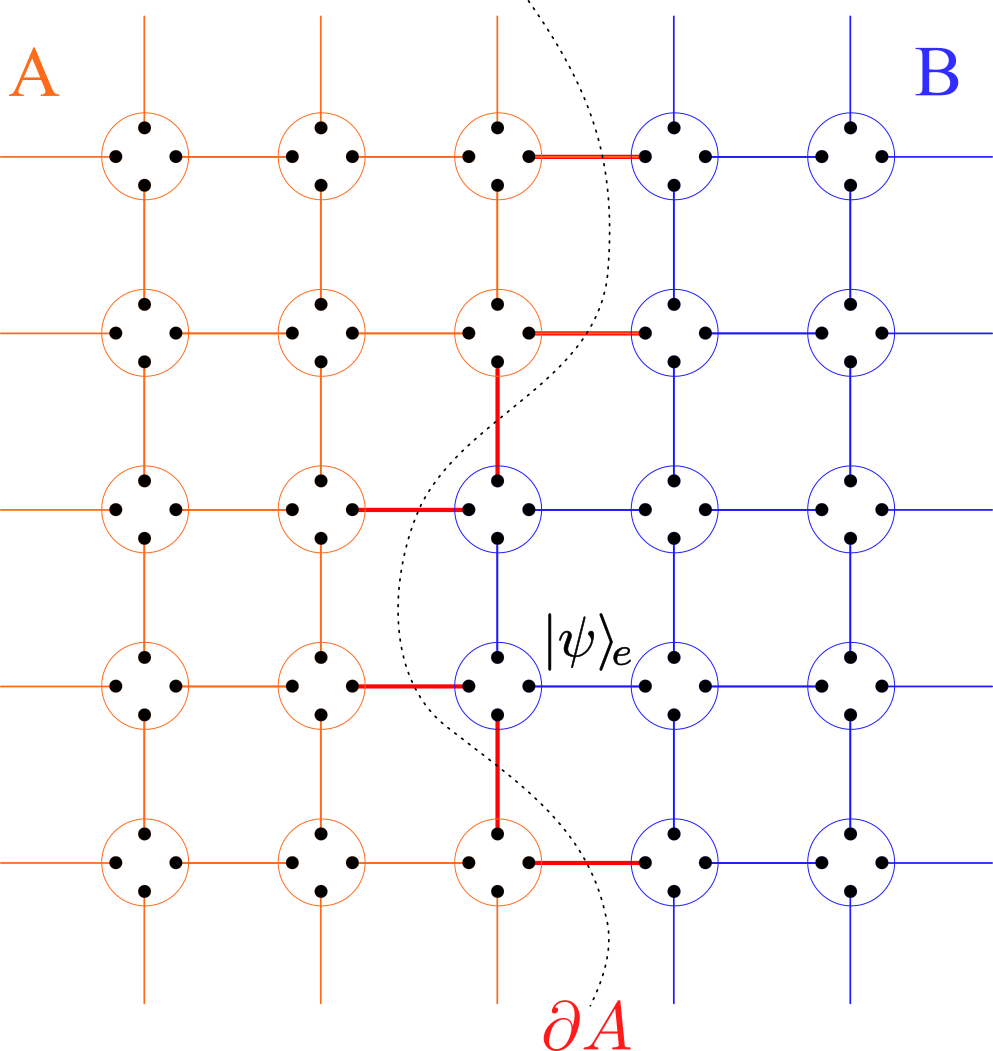}
    \caption{The basic construction. Each circle corresponds to a lattice site $v\in \mc V$ in the vertex set $\mc V$ of a graph $\mc G = (\mc V,\mc E)$ with constant degree $d$ (here, $d=4$).
    Each site $v$ is subdivided into $d$ virtual spins $\tilde v$ of dimension $m\geq 3$. 
    Each edge $e\in\mc E$ is associated with a neighboring pair of virtual spins and a copy of a canonical purification  $\ket\psi\in \CC^m\ox \CC^m$ of a density matrix $\rho$ is placed on each such edge. 
    By construction, the reduced state of every site is thus given by $\rho^{\otimes d}$.
    When the lattice is subdivided into two infinite parts $A$ and $B$, the two parts are only entangled through the infinite tensor product $\ket\psi^{\ox \infty}$ along the boundary, determining the type of local factors $\M_A,\M_B$.}
    \label{fig:construction}
\end{figure}

\vspace{-0.25cm}
\subsection{Gapped ground states in $D\ge 2$}
\vspace{-0.15cm}
Two gapped, local Hamiltonians are said to be in the same (gapped) phase of matter if they can be connected by a path of gapped, local Hamiltonians with a uniform lower bound on the gap along the path. 
The trivial phase corresponds to those Hamiltonians connected to a non-interacting Hamiltonian whose unique ground state is an infinite product state of arbitrary local dimension.
Moreover, if $H_0$ is in the trivial phase, then $H\ox 1+1\ox H_0$ is in the same phase as $H$.\footnote{However, $H_1\ox 1+ 1\ox H_2$ may be trivial even if $H_1,H_2$ are both non-trivial \cite{kitaevAnyonsExactlySolved2006}.}
We refer to this operation as "stacking."

In figure~\ref{fig:construction}, we present a very simple construction, which takes as input an $m$-dimensional density matrix $\rho$ and a graph $\mc G=(\mc V,\mc E)$ of constant degree $d$  (for example a hypercubic lattice in $D$ dimensions) and yields a local, frustration-free, gapped, commuting projector Hamiltonian
\begin{align}
    H^\rho = - \sum_{e\in \mc E} P^{\rho}_e
\end{align}
with a unique ground state. If $\mc G$ is a regular graph, it is also translation-invariant. The construction ensures that the Hamiltonian $H^\rho$ is in the trivial phase, independent of the choice of $\rho$.

Nevertheless, in spatial dimension $D\geq 2$, varying $\rho$ yields all of the types $\I_\oo,\II_\oo,\III_\lambda$ with $0<\lambda\leq 1$ for infinite subsystems while respecting Haag duality. Since stacking with a trivial Hamiltonian does not change the phase, we conclude:
\begin{corollary}\label{cor:III1}
Type $\III_1$ appears in every gapped phase of matter if $D\geq 2$. Every gapped phase of matter in $D\geq 2$ contains models with an embezzling ground state.
\end{corollary}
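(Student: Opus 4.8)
The plan is to reduce \Cref{cor:III1} entirely to the construction of \Cref{fig:construction} together with the stacking remark. Since stacking with a Hamiltonian in the trivial phase does not change the phase, it suffices to exhibit, for \emph{each} gapped phase in $D\geq 2$, a representative of that phase stacked with $H^\rho$ for a suitable $\rho$, and to show that the resulting ground state sector has a type $\III_1$ half-space factor (equivalently, by \Cref{tab:types}, that the ground state is an embezzling state). Because the boundary degrees of freedom of the stacked model split as a tensor product, the half-space factor of the stack is (up to the unitary of \Cref{thm:stable-bipartite}) a tensor product of the half-space factor of the original model with that of $H^\rho$; as a type-$\III_1$ factor absorbs any hyperfinite factor under tensor products (a standard fact about the Powers–Araki–Woods / ITPFI classification, which I would cite), it is then enough to prove the claim for $H^\rho$ alone with a single well-chosen $\rho$.

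So the core task is: for $D\geq 2$, choose $\rho$ on $\CC^m$ ($m\geq 3$) such that the half-space factor $\M_A$ of the ground state of $H^\rho$ is of type $\III_1$. First I would make the ground state explicit: it is the obvious PEPS-type state built by placing the canonical purification $\ket\psi\in\CC^m\ox\CC^m$ of $\rho$ on every edge, so that across a bipartition $\Gamma=A\cup B$ into two half-spaces the entangling structure is exactly the infinite tensor product $\ket\psi^{\ox\infty}$ over the boundary edges. Concretely, $\M_A$ is unitarily equivalent to the ITPFI (infinite tensor product of finite type $\I$) factor generated by $\bigotimes_{e\in\partial}\B(\CC^m)$ in the GNS representation of the product state $\bigotimes_e \omega_e$, where $\omega_e$ is the state on $\B(\CC^m)\ox\B(\CC^m)$ induced by $\ket\psi$ — that is, a two-sided "thermofield-double"-type state with both marginals equal to $\rho$. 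The half-space factor is then the Araki–Woods factor of the list of eigenvalue-ratios of $\rho$, and by the Araki–Woods classification its type is $\III_1$ precisely when the closed multiplicative group generated by these ratios is all of $\RR_{>0}$ (equivalently, when the ratios are not all powers of a common $\lambda<1$). I would pick $\rho=\mathrm{diag}(p_1,p_2,p_3)$ with, say, $\log(p_1/p_2)$ and $\log(p_2/p_3)$ rationally independent; this needs $m\geq 3$, matching the hypothesis in \Cref{fig:construction}.

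I would then separately record the two facts already asserted for the construction but needed here: (i) $H^\rho$ is local, frustration-free, gapped with unique ground state, hence the GNS sector is as described and the half-space algebra genuinely is the claimed ITPFI factor; this is where I would invoke the gap/frustration-freeness so that the infinite-volume ground state is unique and equals the weak-$*$ limit of the finite-volume ground states, ensuring no extra sectors appear. (ii) Haag duality holds for the half-space bipartition: since the state factorizes across the cut into $\ket\psi^{\ox\infty}$ on boundary edges with everything else in a product state, the commutant of $\M_A$ is generated by the boundary operators on the $B$-side tensor leg plus the bulk of $B$, which is exactly $\M_B$; this is the standard Haag-duality computation for such "doubled" boundary states. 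Then $H^\rho$ is in the trivial phase by construction, stacking it onto any gapped Hamiltonian preserves the phase, and the type-$\III_1$ absorption argument gives the conclusion for every phase. By \Cref{tab:types}, type $\III_1$ is equivalent to the ground state being an embezzling state, which gives the second sentence of \Cref{cor:III1}.

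The main obstacle I anticipate is not the algebraic bookkeeping but verifying cleanly that the half-space factor of the \emph{stacked} model is the tensor product of the two half-space factors in a way compatible with Haag duality — i.e., that stacking does not create subtle correlations across the cut and that \Cref{thm:stable-bipartite} (or \Cref{prop:stable-type}) applies to align the bipartitions. One has to be a little careful that the "virtual spin" subdivision in \Cref{fig:construction} assigns boundary half-edges to the correct side, and that the partition used for the stack is a genuine bipartition into properly infinite regions so that \Cref{thm:stable-bipartite} is available; modulo that, the argument is essentially a citation of the Araki–Woods classification plus the absorption property of $\III_1$.
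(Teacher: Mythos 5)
Your proposal follows the paper's own route essentially step for step: the $H^\rho$ construction with the canonical purification $\ket\psi$ placed on every edge, the identification of the half-space algebra as $\B(\H_{\mathrm{int}(A)})\ox\mathrm{ITPFI}(\rho)$, the Araki--Woods criterion (whence $m\geq 3$ and rationally independent log-ratios of eigenvalues to reach type $\III_1$), and the stacking-plus-absorption argument (the AFD type $\III_1$ factor absorbs any AFD factor under tensor products) to transport the conclusion into every gapped phase. The first sentence of the corollary is fully covered by this, and no Haag duality for the original representative is needed there, since the GNS space of the product state factorizes between the layers and the type of $\M_A^{H}\,\ox\,\M_A^{H^\rho}$ is $\III_1$ regardless.

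There is, however, one genuine gap in your treatment of the second sentence. You pass from "the stacked model has a type $\III_1$ half-space factor" to "the ground state is embezzling" by citing \cref{tab:types}, but that equivalence is stated under the standing assumption of Haag duality for the bipartition. Haag duality for the stacked model requires Haag duality for the arbitrary representative $H$ of the phase, and whether this holds for general gapped spin models is precisely one of the paper's open questions --- so you cannot invoke it for an arbitrary phase. You flag "compatibility with Haag duality" as an obstacle but frame it as bookkeeping; the issue is not bookkeeping, it may simply be unknown or false for the $H$ layer. The paper's way out is that the ground state of the stack is a \emph{product} between the two layers, $\Omega_H\ox\Omega_{H^\rho}$, so the embezzlement protocol can be run entirely inside the $H^\rho$ layer, where Haag duality and type $\III_1$ are verified by the explicit construction; any state uncorrelated between the layers is then embezzling even if Haag duality fails for the $H$ layer. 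Adding that observation closes the gap; as written, your argument establishes the second sentence only for phases that contain a Haag-dual representative.
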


In fact, every phase that contains a model satisfying Haag duality also contains a model that is a universal embezzler.
To see this, take the model for which Haag duality holds and stack it with the topologically trivial model $H^\rho$, with $\rho$ chosen so that the resulting type of $\M_A$ is $\III_1$.
If Haag duality holds for the original system, it will also hold for the stacked system, and we can apply the arguments from \cite{long_paper}. 
Haag duality in the original model is necessary to ensure that all states are embezzling. Even if Haag duality fails for the original model, it still follows that all states that are uncorrelated between the two stacked subystems are embezzling.
\vspace{-0.25cm}
\subsection{Long-range vs large-scale entanglement}
\vspace{-0.15cm}
Corollary~\ref{cor:III1} shows that embezzlement neither requires criticality nor does it require long-range entanglement (i.e., a non-trivial gapped phase). Comparing with table \ref{tab:types}, we can order infinite factors according to their types as,
\begin{align}
    \I_\oo<\II_\oo < \III_0 < \III_\lambda < \III_1,
\end{align}
reflecting their resourcefulness for LOCC operations. 
According to this ordering, the maximum amount of large-scale entanglement (corresponding to type $\III_1$) is the same in each gapped phase. 
However, the minimum amount of large-scale entanglement does depend on the phase. 
For example, in two spatial dimensions, it was shown in \cite{naaijkensSplitApproximateSplit2022} that \emph{topological order} requires $\M_A$ not to be of type $\I$ if $A$ is properly infinite. 
For Levin-Wen (LW) models \cite{levinStringnetCondensationPhysical2005}, representing non-chiral topologically ordered phases, it was found in \cite{jonesLocalTopologicalOrder2023} that cone-shaped subsystems have associated factors of type $\III$ if and only if one of the simple objects $i$ in the unitary fusion category $\mc C$, from which the model is constructed, has quantum dimension $d_i\neq 1$.
In fact, it was very recently shown that the type may be computed precisely from the fusion rules and quantum dimensions \cite[App. A]{jonesLocalTopologicalOrder2023}.

It is natural to expect that stable renormalization group (RG) fixed points, such as LW models, correspond to those representatives of a gapped phase with the minimal amount of large-scale entanglement. However, care must be taken:
Different LW models may correspond to the same gapped phase and bulk anyon content, which is expected to be described by the \emph{Drinfeld center} $Z(\mc C)$ of the unitary fusion category $\mc C$, which is a unitary modular tensor category (UMTC) \cite{levinStringnetCondensationPhysical2005,kirillov_jr_string-net_2011,kitaevModelsGappedBoundaries2012,bultinck2017anyons,green_enriched_2024}. For example, if $G$ is a finite group, then $Z(\Vect_G) = Z(\Rep_G)$. One says that $\Vect_G$ and $\Rep_G$ are \emph{Morita equivalent} \cite{muger_subfactors_2003,lootens_mapping_2022}. 
The LW models associated with Morita equivalent categories correspond to the same phase and bulk anyon content, but may have different boundary behavior \cite{kitaevModelsGappedBoundaries2012}. While $\Vect_G$ always yields type $\II_\infty$, $\Rep_G$ yields type $\III$ for non-abelian $G$. 
Thus, the LW models for $\Rep_G$ do not have minimal large-scale entanglement.
In particular, have the following open problem:
\begin{question}
	Does there exist a gapped phase in $D=2$ where all representatives yield type $\III_{\lambda}$ algebras with $\lambda\geq \lambda_*$ for properly infinite subsystems? Is there a phase with $\lambda_* =1$?
\end{question}
Specializing to RG fixed points represented by LW models, a minimal requirement for such a phase would be to find a Morita equivalence class of unitary fusion categories, where all representatives lead to type $\III_{\lambda}$ algebras for some $\lambda\in(0,1]$. That such examples exist can be inferred from the fact that the total quantum dimension $D(\mc C) = \sqrt{\sum_{i}d_{i}^{2}}$ is invariant under Morita equivalence \cite{etingof_fusion_2005}:\footnote{Note that $D(\mc C)\ge1$ for a fusion category over $\CC$ and $D(Z(\mc C))=D(\mc C)^{2}$.} The total quantum dimension of the well-known Fibonacci fusion category $\textup{Fib}$ \cite{wang_topological_2010,natale_classification_2019} is $D(\textup{Fib}) = \sqrt{1+\phi^2}$, $\phi$ being the golden ratio, which forces one of the simple objects to a have a quantum dimension $d_{i}\neq1$. In particular, it is known that $\textup{Fib}$ is the unique representative of its Morita equivalence class, which entails $\lambda_{\textup{Fib}}=\phi^{-1}$ for the associated LW fixed-point model using the results of \cite{jonesLocalTopologicalOrder2023}.
More generally, RG fixed points, which are represented by LW models built from fusion categories that are not Morita equivalent to a pointed fusion category \cite{etingof_tensor_2015}, yield candidates for gapped phases with non-trivial $\lambda_{*}$. A second example, besides $\textup{Fib}$, is the Ising fusion category \cite{wang_topological_2010, natale_classification_2019}, leading to $\lambda_{\textup{Ising}}=1/2$ using the results of \cite{jonesLocalTopologicalOrder2023}.
By stacking a LW model for $\textup{Fib}$ with one for $\textup{Ising}$ we obtain a model with local algebras of type $\III_1$, since the types of approximately finite-dimensional factors fulfill $\III_\lambda \ox \III_\mu \cong \III_1$ if $\log(\lambda)/\log(\mu)$ is irrational \cite[Thm.~XVIII.4.16]{takesaki3}.\footnote{For any $m\in \NN$, the golden ratio $\phi$ fulfills $\phi^m = k\phi + n$ with $k,n\in\NN$. Consequently $\log(\phi)/\log(2)$ being rational would imply that $\phi$ is rational, a contradiction.}
We conjecture that the resulting model is part of a gapped phase where properly infinite subsystems have type $\III_1$ in ground state sectors of all representative models and the bulk anyon content is described by the UMTC $Z(\textup{Fib}\boxtimes \textup{Ising})$, where $\boxtimes$ is the Deligne tensor product \cite{etingof_tensor_2015}, which corresponds to a stacking of the associated LW models. If Haag duality holds throughout the phase, this would imply that all models in the phase are universal embezzlers.
\vspace{-0.25cm}
\subsection{Outlook and discussion}
\vspace{-0.15cm}
Studying the entanglement properties of ground states of local Hamiltonians has been extremely useful, both for conceptual understanding, but also for the development of numerical algorithms based on tensor networks \cite{schollwoeckDensitymatrixRenormalizationGroup2011,ciracMatrixProductStates2021}. So far, the study of entanglement in ground states was almost completely restricted to finite systems.
Employing recent abstract results from operator algebraic entanglement theory, we are now in a position to discuss the different kinds of infinite entanglement that can occur in quantum many-body systems and relate them to operational properties. 
The next step is to find out how large-scale entanglement properties relate to other physical properties of many-body systems. We have listed above only some of the pressing open questions. Of course, there are many more. For example, it would be interesting to better understand the influence of symmetries or their spontaneous breaking. For instance, in certain settings, spontaneous symmetry breaking implies infinite one-shot entanglement by preventing type $\I$ factors \cite{matsui_split_2001} (assuming Haag duality holds).
A second avenue for further research that we did not discuss above is to explore multipartite entanglement properties. In \cite{van_luijk_multipartite_2024}, we have constructed examples of multipartite embezzling states for an arbitrary number of parties.
This suggests to ask whether such states can appear as ground states of many-body systems:
\begin{question}
Do there exist many-body systems whose ground states contain infinite $N$-partite one-shot entanglement for $N\geq 3$? 
If so, are there ground states that are multipartite embezzling?
\end{question}
Recently, the possible types of von Neumann algebras describing the observable accessible to localized observers in spacetime have attracted considerable interest in the context of quantum gravity \cite{leutheusser2023emergent,witten2022crossed_product,chandrasekaran_algebra_2023,longo2023continuous_entropy,akers_relative_2024,jensen_generalized_2023,kudler-flam_covariant_2023,chenClockJustWay2024} and quantum reference frames \cite{fewsterQuantumReferenceFrames2025,devuystGravitationalEntropyObserverdependent2024,vuystLinearizationInstabilitiesCrossed2024,vuystCrossedProductsQuantum2024}. The results summarized in this work show that essentially all possible types of bipartite entanglement structures can already be found in quantum many-body systems, even those described by simple tensor networks. This may be relevant for toy models of quantum gravity and holography in terms of many-body systems.
\vspace{-0.25cm}
\begin{acknowledgments}
\vspace{-0.15cm}
We thank  Albert H. Werner, Reinhard F. Werner, and Laurens Lootens for interesting discussions and useful correspondence. 
LvL and AS have been supported by the MWK Lower Saxony via the Stay Inspired Program (Grant ID: 15-76251-2-Stay-9/22-16583/2022).
\end{acknowledgments}

\bibliographystyle{apsrev4-2}
\bibliography{refs.bib}
\clearpage
\appendix

\section{Infinite tensor products} \label{sec:ITPFI}
\def\ITPFI{\mathrm{ITPFI}}

In the following, we make use of factors induced by infinite tensor products of finite type I factors, so-called ITPFI factors. In our context, we can think of them as the von Neumann algebras induced by infinite spin chains, with reference state $\ox_{i\in \ZZ} \rho_{A,i}$. Via purification, we can think of them as Alice's half of two half-infinite spin chains next to each other where the $i$-th spin of Alice and the $i$-th spin of Bob are in an entangled pure state $\ket{\Psi_i}$ that purifies $\rho_{A,i}$, see \cref{fig:itpfi}.
We denote the resulting factor by $\ITPFI(\{\rho_{A,i}\})$ and by $\ITPFI(\rho_A)$ if $\rho_{A,i}=\rho_A$ for all $i$. The type of $\ITPFI(\{\rho_{A,i}\})$ solely depends on the spectra of $\rho_{A,i}$ as $i\rightarrow \pm \infty$ \cite{araki_classification_1968}.  Every approximately finite-dimensional (AFD) factor not of finite type $\I$ and not of type $\III_0$ is isomorphic to $\ITPFI(\rho)$ for suitable $\rho$ \cite{powers_representations_1967,connes_classification_1976, haagerup_injectivity_1985}.  In particular, the unique AFD type $\II_1$ factor results whenever $\rho$ is maximally mixed, see \cite[Cor.~XIV.1.12]{takesaki3}.
If each $\rho_{A,i}$ is pure we have $\ITPFI(\{ \rho_{A,i}\}) = \B(\H)$, i.e., a type $\I$ factor. 
In particular, if $\rho_{A,i}=\tr_B(\ketbra{\Psi_i}{\Psi_i}), \rho_{B,i} = \tr_A(\ketbra{\Psi_i}{\Psi_i})$ we have
that $\M_A := \ITPFI(\{\rho_{A,i}\})$ and $\M_B:= \ITPFI(\{ \rho_{B,i})\}$ are commuting factors in Haag duality on a Hilbert space $\H$ with $\B(\H) = \ITPFI(\{\ketbra{\Psi_i}{\Psi_i}\})$. 
The construction naturally generalizes to any finite number of parties.

\begin{figure}[t!]
    \centering
    \includegraphics[width=0.9\linewidth]{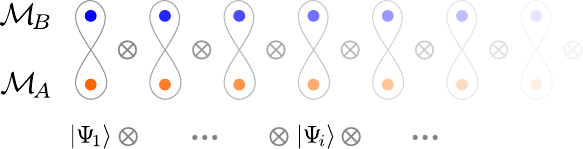}
    \caption{Illustration of ITPFI factors: Two half-infinite spin chains described by an infinite tensor product of bipartite entangled pure states $\ket{\Psi_i}$ (with marginals $\rho_{A,i}$ and $\rho_{B,i}$, respectively) give rise to a (infinite-dimensional) type $\I$ factor $\B(\H)$ with commuting subfactors $\M_A$ and $\M_B$ described by the ITPFI factors $\M_A = \ITPFI(\{\rho_{A,i}\})$ and $\M_B = \ITPFI(\{\rho_{B,i}\})$ which fulfill Haag duality, $\M_A'=  \M_B$.}
    \label{fig:itpfi}
\end{figure}

\section{Details of the construction in $D\geq 2$}
This appendix provides the proofs for the statements regarding the construction of gapped ground states of variable types in $D\geq 2$ sketched in the main text, see also \cref{fig:construction} for reference. 
The construction follows that of (trivial) projected entangled-pair states (PEPS) and works as follows: For concreteness, we consider a hypercubic lattice\footnote{Any other lattice with constant degree $d$ works by simply replacing $2D$ with $d$ in the following.} as an infinite graph with vertex set $\mc V=\ZZ^D$, where next neighbours are connected by an edge $e\in \mc E$. Each site $v\in \mc V$ is described by a Hilbert space of dimension $m^{2D}$. That is, each local site consists of $2D$ virtual spins of dimension $m$. 
We associate a pair of virtual spins with each edge $e\in\mc E$ such that each virtual spin is part of exactly one edge (see \cref{fig:construction}). Labelling each local virtual spin by $j\in \{1,\ldots,2D\}$ we also define the set of virtual spins $\tilde{\mc V}=\mc V\times \{1,\ldots,2D\}$.   Each virtual spin $\tilde v\in \tilde{\mc V}$ is part of a unique edge $e(\tilde v)$.
Consider a density matrix $\rho$ on $\CC^m$ and let $\ket\psi \in \CC^m \otimes \CC^m$ be its canonical purification (which is invariant under swap of the two virtual spins). We write $\ket\psi_e$ for a copy of $\ket\psi$ placed on edge $e\in \mc E$. 
The ground state of the Hamiltonian is then constructed as follows: 
With any finite subset $\Lambda \subset \ZZ^D$, we associate the set of edges $\mc E_\Lambda$ containing all edges within $\Lambda$. On the local algebra $\A_\Lambda$ we then define the pure state
\begin{align}
     \ket\Omega_{\rho,\Lambda} = \bigg(\bigotimes_{e\in \mc E_\Lambda}\ket\psi_e\bigg)\otimes \bigg(\bigotimes_{\tilde v\in \tilde{\mc V}, e(\tilde v)\notin \mc E_\Lambda} \ket 1_{\tilde v}\bigg).
\end{align}
We obtain a unique state $\omega$ on the local algebra $\A_{\ZZ^d}$ by setting
\begin{align}
    \omega_\rho(a) = \lim_{\Lambda \nearrow \ZZ^D} \bra \Omega_{\rho,\Lambda} a \ket\Omega_{\rho,\Lambda}.
\end{align}
The state $\omega_\rho$ may be represented as a an (injective) PEPS with finite bond-dimension \cite{verstraeteRenormalizationAlgorithmsQuantumMany2004,perez-garciaPEPSUniqueGround2007}.
If we write $P_e(\rho)$ for the projector $|\psi\rangle\langle\psi|_e \otimes 1$, then $\omega$ is the unqiue ground state of the (formal) Hamiltonian
\begin{align}
    H(\rho) = -\sum_{e} P_e(\rho).
\end{align}
Since $[P_e(\rho),P_{e'}(\rho)]=0$ for all $e\in\mc E$, it is a gapped, frustration-free, translation-invariant, nearest neighbor, commuting projector Hamiltonian. 
The gap is independent of $\rho$. If $s\mapsto \rho(s)$ is a path of density matrices, then the induced path $H(\rho(s))$ is local and gapped for all $s$. 
Clearly, there exist finite-depth quantum circuits mapping between the states $\omega_{\rho(s)}$. By choosing $\rho(0)=|1\rangle\langle 1|$, corresponding to a product state, we find that all the Hamiltonians $H(\rho)$ correspond to the trivial gapped phase of matter.

\subsection{The local observable algebras}
Consider an arbitrary infinite subset $A\subset \mc V$ and its complement $B$.
We denote by $\partial A\subset \mc E$ the set of edges not fully contained in $A$, but ending in $A$. For each $e\in \partial A$ there is precisely one virtual spin $\tilde v_A$ in $A$ and one virtual spin $\tilde v_B$ in $B$.
If $\Omega_\psi\in \mc H$ is the GNS representation of the ground state $\omega_{\psi}$, we can write it symbolically as 
\begin{align}
    \Omega_{\psi} &= \Omega_{\psi,\mathrm{int}(A)} \ox \Omega_{\psi, \partial A} \ox \Omega_{\psi, \mathrm{int}(B)} \\ 
    \H &= \H_{\mathrm{int}(A)}\ox  \H_{\partial A}\ox \H_{\mathrm{int}(B)},
\end{align}
where $\mathrm{int}(A)\subset \tilde{\mc V}$ contains all virtual spins that are not part of any edge in $\partial A$ and $\Omega_{\psi,\partial A} = \otimes_{e\in\partial A}\ket\psi_e = \Omega_{\psi,\partial B}$. 
It follows that the von Neumann algebras $\M_A,\M_B$ associated with $A$ and $B$ are factors in Haag duality such that
\begin{align}
    \M_{A/B} \cong \B(\H_{\mathrm{int}(A)}) \ox \ITPFI(\rho).
\end{align}
Due to the factor $ \B(\H_{\mathrm{int}(A)})$, $\M_{A/B}$ are properly infinite (hence cannot be of type $\I_n$ or $\II_1$). Their subtype is fully specified by the spectrum of $\rho$. 
In particular, if $m\geq 3$, $\ITPFI(\rho)$ can have type $\I_\infty,\II_1,\III_\lambda$ with $0<\lambda\leq 1$. Since $\I_\infty \ox \II_1 \cong \II_\infty$, the claim on the local types follows.

\section{Stability of approximately finite-dimensional factors}\label{app:stability}

Let us sketch why every approximately finite-dimensional (AFD) factor $\M$ not of finite type $\I$ is stable. If $\mc K$ is  an $n$-dimensional Hilbert-space we have $\B(\K) \cong M_n(\CC)$.  We thus want to show that $\M\cong \M \ox M_n(\CC)$ for every $n\in \NN$.

AFD factors are completely classified \cite{connes_classification_1976,haagerup_connes_1987}, and we can separate two cases: either $\M$ is finite, which implies that $\M$ is the unique AFD type $\II_1$ factor, or $\M$ is \emph{properly infinite}, 
which means that for any $n\in \NN$ we can find $n$ orthogonal projections $p_i = v_i v_i^\dagger$ such that $\sum_i p_i = 1$ and where $v_i$ are isometries such that 
\begin{align}
\label{eq:cuntz}
	v_i^\dagger v_j = \delta_{ij} 1,\quad \sum_{i} v_i v_i^\dagger =1. 
\end{align}
The stability of the $\II_1$ factor follows from the fact that it is isomorphic to any ITPFI factor given by an infinite tensor product of maximally mixed states \cite[Cor.~XIV.2.12]{takesaki3}.

For the properly infinite case, we consider the unitary operator $u: \H\ox \CC^n \rightarrow \H$ defined in terms of the isometries $v_{i}$ by
\begin{align}
	u = \sum_i v_i \ox \bra i.
\end{align}
From \cref{eq:cuntz}, we find that for any operator of the form $a\ox \ketbra{i}{j}\in \M\ox M_n(\CC)$ we have
\begin{align}
	u (a\ox \ketbra{i}{j})u^\dagger = v_i a v_j^\dagger \in \M.
\end{align}
Hence $u$ realizes the isomorphism $\M\cong \M\ox M_n(\CC)$.

\section{Proof of \cref{prop:stable-type} and \cref{thm:stable-bipartite}}
We begin by noting that if $A$ is infinite, then the factor $\M_A$ cannot have finite type $\I$. This follows because the quasi-local C* algebra generated by $\A_A$ (with respect to the usual operator norm on matrices) clearly does not allow for (non-trivial) finite-dimensional representations. Therefore $\M_A$ is stable,  $\M_A \cong \M_A \ox M_n(\CC)$ for any $n\in \NN$. 

\begin{proof}[Proof of \cref{prop:stable-type}]

We may assume that $R$ is disjoint from $A$. Now,  note that $\M_{A\cup R}$ is the von Neumann algebra generated by $\M_A$ and the finite type $\I$ factor $\M_R$, which commutes with $\M_A$. Therefore, we have a factorization $\H=\H_{R^c}\ox \H_R$, with $\H_R$ finite-dimensional, such that $\M_R = 1\ox \B(\H_R)$ and, for $A\subset R^c$, there is a unique von Neumann algebra $\N_A\cong \M_A$ on $\H_{R^c}$ such that $\M_A=\N_A\ox 1\subset\B(\H_{R^c})\ox 1$. Thus, the stability of AFD factors that are not of finite type $\I$ (see \cref{app:stability}) implies the first part of \cref{prop:stable-type}: $\M_{A\cup R} = \N_A \ox \B(\H_R)\cong \N_A \cong \M_A$.
It remains to show that Haag duality holds for $\M_{A\cup R}$ if it holds for $\M_A$.
\begin{align*}
    \M_{A\cup R}&= (\N_A\ox\M_R)' \\
    &= (\N_A\ox \B(\H_R))' \\
    &= \N_A' \ox \B(\H_R)' \\
    &= \N_{(A\cup R)^c} \ox 1 = \M_{(A\cup R)^c}.\qedhere
\end{align*}    
\end{proof}

\begin{lemma}\label{lem:properly-inf}
    Suppose that $\H$ contains a translation invariant vector $\Omega\ne0$ and that $A$ is a properly infinite region. 
    Then, the factor $\M_A$ is properly infinite.
\end{lemma}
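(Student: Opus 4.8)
The plan is to invoke the classification of approximately finite-dimensional (AFD) factors: $\M_A$ is AFD, and an AFD factor is properly infinite unless it is \emph{finite}, i.e.\ of finite type $\I$ or of type $\II_1$. So it is enough to exclude these two cases. Finite type $\I$ is immediate and uses nothing new --- as already observed, $\M_A$ cannot be finite-dimensional because $A$ is infinite --- so the entire content is to rule out type $\II_1$ (for $D=1$ half-chains this recovers the known statement of \cite{keyl_entanglement_2006}), and it is here that the translation-invariant vector enters.

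Suppose then, for contradiction, that $\M_A$ is of type $\II_1$, with unique faithful normal tracial state $\tau$, and set $\omega=\langle\Omega,(\,\cdot\,)\Omega\rangle$. Since the representation of $\A_\Gamma$ on $\H$ is irreducible and $\Omega\ne0$, the closed $\A_\Gamma$-invariant subspace $\overline{\A_\Gamma\Omega}$ is all of $\H$; hence $\Omega$ is cyclic, $\omega$ is a \emph{pure} state, and (by uniqueness of the GNS construction) translations are implemented by unitaries $U_g$ with $U_g\Omega=\Omega$ and $U_g\M_C U_g^\dagger=\M_{C+g}$ for every region $C$. The key step is to show $\omega|_{\M_A}=\tau$. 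I would choose a lattice vector $g$ with $A+g\subsetneq A$ and $\bigcap_{n\ge0}(A+ng)=\emptyset$ (such a $g$ exists for the regions of interest --- half-spaces, cones, and those occurring in \cref{thm:stable-bipartite} --- and the general properly infinite case reduces to this). Writing $A_n=A+ng$, the factors $\M_{A_n}$ then form a strictly decreasing sequence with $\bigcap_n\M_{A_n}=\CC1$, since an element of the intersection commutes with every $\M_{A_n^c}$, hence with the SOT-dense algebra $\bigcup_n\M_{A_n^c}$ (note $A_n^c\nearrow\Gamma$). Let $E_n\colon\M_A\to\M_{A_n}$ be the $\tau$-preserving conditional expectation and write $\omega|_{\M_A}=\tau(h\,\cdot\,)$ with $0\le h\in L^1(\M_A,\tau)$; then $\omega|_{\M_{A_n}}=\tau(E_n(h)\,\cdot\,)$ on $\M_{A_n}$, and the martingale convergence theorem for the decreasing sequence $(E_n)$ gives $E_n(h)\to1$ in $L^1$, i.e.\ $\norm{\omega|_{\M_{A_n}}-\tau|_{\M_{A_n}}}\to0$. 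On the other hand $\Ad(U^n)$ is a $*$-isomorphism $\M_A\xrightarrow{\ \sim\ }\M_{A_n}$ under which $\omega|_{\M_{A_n}}$ pulls back to $\omega|_{\M_A}$ (using $U^n\Omega=\Omega$) and $\tau|_{\M_{A_n}}$ pulls back to the unique tracial state $\tau$ of $\M_A$; since pulling back along a $*$-isomorphism is an isometry of preduals, $\norm{\omega|_{\M_A}-\tau}=\norm{\omega|_{\M_{A_n}}-\tau|_{\M_{A_n}}}\to0$, forcing $\omega|_{\M_A}=\tau$.

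The contradiction is then quick. From $\omega|_{\M_A}=\tau$ it follows that $\omega$ restricts to the normalized trace on $\M_B=\B(\H_B)$ for every finite $B\subseteq A$, because the restriction of the tracial state of a $\II_1$ factor to a type $\I$ subfactor is its unique tracial state, the normalized trace. Since $A$ contains balls of arbitrary size, every finite $S\subset\Gamma$ is a translate of some finite subset of $A$, so translation invariance of $\omega$ forces $\omega$ to restrict to the normalized trace on $\M_S=\B(\H_S)$ for \emph{every} finite $S\subset\Gamma$. Hence $\omega$ is the unique tracial state of the UHF algebra $\A_\Gamma=\bigotimes_{x\in\Gamma}M_d(\CC)$ --- but that state generates the hyperfinite $\II_1$ factor in its GNS representation and is therefore not pure, contradicting purity of $\omega$. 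Thus $\M_A$ cannot be of type $\II_1$; being AFD and not of finite type $\I$, it is properly infinite.

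The step I expect to be the real obstacle is establishing $\omega|_{\M_A}=\tau$: this is the one place where uniqueness of the trace on a $\II_1$ factor must be combined with translation invariance, and it hinges on choosing a translation that shifts $A$ strictly into itself with trivial intersection of iterates. For half-spaces, cones and the regions appearing in the applications this is transparent, but handling an arbitrary region that merely "contains balls of arbitrary size" needs a small additional geometric argument. The remaining ingredients --- irreducibility $\Rightarrow$ cyclicity of $\Omega$ $\Rightarrow$ purity of $\omega$; martingale convergence for the decreasing conditional expectations; uniqueness of the trace on a UHF algebra and non-purity of the resulting factor state --- are standard.
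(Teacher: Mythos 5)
Your overall strategy --- dispose of finite type $\I$ trivially, then rule out type $\II_1$ by showing that purity plus translation invariance would force $\omega$ to restrict to the trace on every finite region, hence to be the (non-pure) tracial state of the UHF algebra --- is exactly the strategy of the argument the paper invokes by reference (\cite[Prop.~5.3]{keyl_entanglement_2006}, adapted in \cite[Thm.~11.1.2]{naaijkensAnyonsInfiniteQuantum2012}); the paper's own proof is essentially just that citation plus one geometric remark. Your analytic ingredients are all sound: cyclicity and purity of $\omega$ from irreducibility, $\bigcap_n\M_{A_n}=\CC1$ via commutation with the weakly dense algebra $\bigcup_n\A_{A_n^c}$, decreasing martingale convergence in $L^1(\M_A,\tau)$, the isometric pullback along $\Ad(U_g)$, and non-purity of the UHF trace. (Minor remark: the dichotomy "finite or properly infinite" holds for every factor, so AFD-ness is not needed for that step.)

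The genuine gap is the geometric input you need to run the martingale argument: a translation $g$ with $A+g\subsetneq A$ and $\bigcap_n(A+ng)=\emptyset$. This exists for half-chains, half-spaces and cones, but \emph{not} for a general properly infinite region in the paper's sense ("contains balls of arbitrary size"): take $A$ to be a disjoint union of balls $B(y_n,n)$ with centers scattered so that $A$ contains no infinite tube in any direction; then no translate of $A$ --- nor of any properly infinite subregion of $A$, which is what a reduction via "a von Neumann subalgebra of a finite algebra is finite" would require --- is contained in $A$. So "the general properly infinite case reduces to this" is not justified, and I do not see how to make that reduction work. The fix, which is what the paper's remark that for any finite $R\Subset\Gamma$ there is a ball of arbitrary radius in $A\setminus(R\cap A)$ is pointing at, avoids any global endomorphism of $\M_A$: write $\omega|_{\M_A}=\tau(h\,\placeholder)$, approximate $h$ in $L^1$ by a local element $h_R\in\M_{A\cap R}$, and note that for a finite $S\subset A$ whose translate $S+g$ lies in a large ball of $A$ disjoint from $R$, the $\tau$-preserving conditional expectation onto $\M_{S+g}$ satisfies $E_{S+g}(h_R)=\tau(h_R)1$ exactly, because the unique tracial state on the finite-dimensional factor $\M_{A\cap R}\vee\M_{S+g}$ factorizes over the two commuting tensor factors; hence $\norm{E_{S+g}(h)-1}_1\le2\norm{h-h_R}_1$ can be made arbitrarily small. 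Since translation invariance forces the density $E_{S+g}(h)$ of $\omega|_{\M_{S+g}}$ to have a $g$-independent eigenvalue distribution, it must equal $1$, i.e.\ $\omega$ is tracial on every finite subregion of $A$, and your final contradiction applies verbatim. This version uses only the paper's definition of properly infinite and nothing more.
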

\begin{proof}
    Since $A$ is (properly) infinite, $\M_A$ clearly cannot have finite type $\I$ (the quasi-local algebra induced by $\A_A$ does not have non-trivial finite-dimensional representations). It thus remains to show that $\M_A$ does not have type $\II_1$. This can be shown via the same argument as in \cite[Thm.~11.1.2]{naaijkensAnyonsInfiniteQuantum2012}, which adapts the argument of \cite[Prop.~5.3]{keyl_entanglement_2006}. To this end, we note that $A$ being properly infinite implies that for any finite region $R\Subset \Gamma$, there is a ball of arbitrary radius in $A\setminus (R\cap A)$.
\end{proof}

\begin{lemma}\label{lem:properly-inf2}
    Suppose that $\H$ contains a translation invariant vector $\Omega\ne0$.
    Let $A$ be a properly infinite region such that $A^c$ is properly infinite as well and assume that Haag duality for the bipartition $\Gamma = A\cup A^c$.
    Then $\M_A$ is in standard representation on $\H$.
\end{lemma}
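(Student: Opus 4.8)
The claim is that $\M_A$ is in standard form on $\H$. The natural route is to produce a cyclic and separating vector for $\M_A$ in $\H$, since for a properly infinite factor this is equivalent to being in standard representation (a properly infinite AFD factor has a standard form, and any representation with a cyclic separating vector is quasi-equivalent, hence spatially isomorphic once properly infinite multiplicities are matched). The plan is to show that the translation-invariant vector $\Omega$ is, up to a controlled modification, cyclic and separating for $\M_A$.

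First I would record what we already have: by Lemma \ref{lem:properly-inf}, both $\M_A$ and $\M_{A^c}$ are properly infinite factors, and by Haag duality $\M_{A^c} = \M_A'$. So it suffices to find a single vector in $\H$ that is cyclic for $\M_A$ and cyclic for $\M_A' = \M_{A^c}$ (cyclicity for the commutant is the same as separating for $\M_A$). The key input is the translation-invariant vector $\Omega$. I would argue that $\Omega$ is separating for $\M_A$: if $a\in\M_A$ annihilates $\Omega$, then because $A$ contains balls of arbitrary size arbitrarily far out, translating $a$ moves its (approximate) support into regions where $\Omega$ looks locally like it does everywhere, and translation-invariance of $\Omega$ forces $a=0$ — this is the same mechanism used in Lemma \ref{lem:properly-inf} (cf. \cite[Prop.~5.3]{keyl_entanglement_2006}, \cite[Thm.~11.1.2]{naaijkensAnyonsInfiniteQuantum2012}). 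By the symmetric argument (using that $A^c$ is also properly infinite), $\Omega$ is separating for $\M_{A^c} = \M_A'$, i.e.\ cyclic for $\M_A$. Hence $\Omega$ is cyclic and separating for $\M_A$, which is exactly the statement that $\M_A$ (equivalently the inclusion $\M_A\subset\B(\H)$ together with the vector) is in standard representation.

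The step I expect to be the main obstacle is making the "translate the support out to infinity" argument fully rigorous for a general properly infinite region $A$, since an element $a\in\M_A$ is only a weak limit of strictly local operators and need not be localized in any ball. The clean way around this is the two-step reduction also used in the cited references: approximate $a$ in the strong operator topology by $a_n \in \A_{A_n}$ with $A_n\Subset A$, use that properly infinite $A$ contains, for each $n$, a translate $A_n + t_n$ of $A_n$ still inside $A$ and arbitrarily far from any fixed finite window, invoke translation invariance $\norm{a_n\Omega} = \norm{(\tau_{t_n}a_n)\Omega}$, and combine with an asymptotic-abelianness/clustering estimate — or simply with the fact that $\tau_{t_n}a_n \to \lambda 1$ weakly for a suitable subsequence and scalar $\lambda$, which is pinned down by $\langle\Omega, \tau_{t_n}a_n\,\Omega\rangle = \langle\Omega,a_n\Omega\rangle$. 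Passing to the limit gives $a\Omega = \lambda\Omega$, and if $a$ was assumed to kill $\Omega$ then $\lambda = 0$, whence $a = 0$ by the same computation applied to $a^*a$. I would then remark that the identical argument with $A$ replaced by $A^c$ yields separation of $\M_{A^c}$, and conclude via Haag duality.
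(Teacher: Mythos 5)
There is a genuine gap. Your reduction of ``standard representation'' to ``there exists a cyclic and separating vector'' is fine (on a separable $\H$ all von Neumann algebras are $\sigma$-finite, so this characterization applies), but the specific claim that the translation-invariant vector $\Omega$ is separating for $\M_A$ is false in general, and the lemma is used by the paper in settings where it fails. Take the trivial product ground state $\Omega=\ket{0}^{\ox\Gamma}$ with $A$ a half-space: then $\H\cong\H_A\ox\H_{A^c}$, $\M_A=\B(\H_A)\ox 1$ is a properly infinite type $\I_\infty$ factor satisfying all hypotheses, yet the product vector $\Omega$ is annihilated by $(1-\kettbra{0_A}{0_A})\ox 1\in\M_A$, so it is neither separating nor cyclic for $\M_A$. (The conclusion of the lemma still holds there, but not via $\Omega$.) Your proposed mechanism also cannot work as stated: translation invariance gives $\norm{a\Omega}^2=\omega(a^*a)$ with $\omega$ the translation-invariant state, and concluding $a=0$ from $\omega(a^*a)=0$ for $a\in\M_A$ is exactly the assertion that $\omega$ is faithful on $\M_A$, i.e.\ the separating property you are trying to prove -- the argument is circular, and the cited arguments of Keyl et al.\ and Naaijkens use translation invariance only to exclude \emph{finite} type (no trace / properly infinite), not to establish faithfulness of the ground state.

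The paper's proof avoids any distinguished vector: by \cref{lem:properly-inf} applied to both $A$ and $A^c$, the factors $\M_A$ and $\M_{A^c}=\M_A'$ (Haag duality) are properly infinite, and a properly infinite von Neumann algebra $\N$ on $\H$ is in standard representation if and only if $\N'$ is properly infinite \cite[Thm.~III.2.6.16]{blackadar_operator_2006}. If you want to keep your vector-based framing, the correct statement is that \emph{some} cyclic and separating vector exists whenever both $\M_A$ and $\M_A'$ are properly infinite on a separable Hilbert space -- which is essentially the same theorem -- not that $\Omega$ itself does the job.
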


\begin{proof}
    The result follows from \cref{lem:properly-inf} since a properly infinite von Neumann algebra $\N$ on $\H$ is in standard representation if and only if $\N'$ is properly infinite as well \cite[Thm.~III.2.6.16]{blackadar_operator_2006}.
\end{proof}

\begin{proof}[Proof of \cref{thm:stable-bipartite}]
    \Cref{lem:properly-inf2} shows that both $\M_A$ and $\M_{\tilde A}$ are in standard representation.
    Since $A = \tilde A \cup R$ with $R$ finite, we know that 
    \begin{equation}\label{eq:budden}
        \M_{A} \cong \M_{\tilde A} \ox M_{d^{|R|}}(\CC),
    \end{equation}
    where we used $\M_R\cong M_{d^{|R|}}(\CC)$.
    Since $\M_A$ is an approximately finite dimensional factor that is not a finite type $\I$ factor, \eqref{eq:budden} implies that $\M_A\cong \M_{\tilde A}$.
    Since the standard representation of a von Neumann algebra is unique up to unitary equivalence, this implies the existence of a unitary $U$ on $\H$ such that $U\M_A U^\dagger = \M_{\tilde A}$.
    Since Haag duality holds, this further implies that $U\M_B U^\dagger = \M_{\tilde B}$.
\end{proof}

\section{Proof of \Cref{prop:distillation}}
\label{app:proof-distillation}

Here, we prove \cref{prop:distillation}. 
Consider a local Hamiltonian $H$ on the lattice $\Gamma$. Let $\Gamma_n$ be an increasing sequence of finite subsets such that $\cup_n \Gamma_n = \Gamma$.
In the main text, we choose $\Gamma_n$ to be balls of increasing radii, but that is not important.
Denote by $H_n$ the (formal) restriction of $H$ to $\Gamma_n$, retaining only those terms completely supported within $\Gamma_n$, and denote by $\Omega_n\in \H_n$ a ground state of $H_n$, where we introduced the (finite-dimensional) Hilbert-space $\H_n$ associated with $\Gamma_n$.
We can extend the state $\Omega_n$ by a fixed infinite product state, e.g., $\ket 0$ on every site in $\Gamma\setminus \Gamma_n$, to view it as a state on the full local algebra $\A_\Gamma$.
If the Hamiltonian $H$ on $\A_\Gamma$ has a unique ground state on $\A_\Gamma$, it is a general result that the sequence of finite-size ground states weak*-converges to the ground state in the thermodynamic limit. This follows from \cite[Prop. 5.3.25]{BR2} and the weak*-compactness of the state space. Hence for every $a\in A_\Gamma$ and $\eps>0$ there exists $n_0\in\NN$ such that for all $n\geq n_0$
\begin{align}\label{eq:app:weak-star}
  |\bra{\Omega_n}a\ket{\Omega_n} - \bra\Omega a\ket\Omega| <\eps.
\end{align}
Now assume that $\Phi \in (\CC^n)^{\otimes N}$ can be distilled from $\Omega$ relative to a fixed $N$-partition $A_1,\ldots,A_N$. 
Hence for every $\eps>0$ there exists an LOCC protocol $T_\eps$, such that
\begin{align}\label{eq:app:locc1}
    \bra{\Phi} \Tr_\H T_\eps(\kettbra\Omega \ox \kettbra 0^{\otimes N})\ket{\Phi}   
    \geq 1-\eps.
\end{align} 
The observable algebra of the finite-dimensional $N$-partite system, on which $\Phi$ is a pure state, is $\B=M_n(\CC)^{\ox N}$. For ease of notation we denote by $\B_j\cong M_n(\CC)\subset \B$ the $j$-th factor of $\B$.
The overall quantum channel $T_\eps$ associated with an LOCC protocol can be represented as
\begin{align}
    T_\eps = \sum_{\alpha=1}^M k_\alpha(\placeholder)k_\alpha^\dagger, \qquad k_\alpha = k_\alpha\up{1} \cdots k_\alpha\up{N},
\end{align}
where each $k_\alpha\up{j}$ is an element of $\A_{A_j} \ox \B_j$ and
where $M\in\NN$ ($M$ encodes the total amount of classical communication throughout the full LOCC protocol).

Let us define the operators
\begin{align}
    \tilde k_\alpha = (1\ox \bra\Phi) k_\alpha(1\ox \ket 0^{\ox N}) \in \A_\Gamma.
\end{align}
We can now write \eqref{eq:app:locc1} as
\begin{align}
    \sum_{\alpha=1}^M \Tr_\H(\tilde k_\alpha \kettbra\Omega \tilde k_\alpha^\dagger) = \sum_{\alpha=1}^M \bra\Omega \tilde k_\alpha^\dagger \tilde k_\alpha \ket\Omega \geq 1-\eps.\nonumber
\end{align}
Since each $\tilde k_\alpha$ has finite support, the support of $a:= \sum_{\alpha} \tilde k_\alpha^\dagger \tilde k_\alpha$ is finite, too.
We denote its support by $\tilde \Gamma_\eps$. 
By weak*-convergence, there exists an $n_0$ such that for all $n\geq n_0$ we have $\tilde \Gamma_\eps \subset \Gamma_n$ and
\begin{align}
    |\bra{\Omega_n} a \ket{\Omega_n} - \bra\Omega a \ket\Omega | \leq \eps.\nonumber
\end{align}
Hence we find
\begin{align}
     \bra{\Phi} &\Tr_{\H_n} T_\eps(\kettbra{\Omega_n} \ox \kettbra 0^{\otimes N})\ket{\Phi} \nonumber\\
     &= \sum_\alpha \Tr_{\H_n}(\tilde k_\alpha \kettbra{\Omega_n} \tilde k_\alpha^\dagger)\nonumber \\
    &= \bra{\Omega_n}a \ket{\Omega_n}  
     \geq \bra{\Omega}a\ket{\Omega} -\eps  
     \geq 1-2\eps.
\end{align}
Thus, for every $\eps>0$, there exists an $n_0$ such that for all $n\geq n_0$ $\Phi$ may be distilled from $\ket{\Omega_n}$ up to error $\eps$, which finishes the proof.
\end{document}